\numberwithin{equation}{section} 
\theoremstyle{plain}
  \newtheorem{prop}{Proposition}
  \newtheorem{lemma}{Lemma}
\theoremstyle{definition}
  \newtheorem{remark}{Remark}
  \numberwithin{prop}{section}
   \numberwithin{cor}{section}
   \numberwithin{remark}{section}
\title{\Large\bfseries Moments of characteristic polynomials for classical $\beta$ ensembles}%
\author{Bo-Jian Shen and Peter J. Forrester}
\date{}
\begin{document}

\maketitle

School of Mathematics and Statistics,  The University of Melbourne,
Victoria 3010, Australia. \: \: Email: {\tt pjforr@unimelb.edu.au};  {\tt bojian.shen@unimelb.edu.au} \\

\bigskip

\begin{abstract}
\noindent
For random matrix ensembles with unitary symmetry, there is interest
in the large $N$ form of the moments of the absolute value of the characteristic polynomial for their relevance to the Riemann zeta function on the critical line , and to Fisher-Hartwig asymptotics in the theory of Toeplitz determinants. The constant (with respect to $N$) in this asymptotic expansion, involving the Barnes $G$ function, is most relevant to the first of these, while the algebraic term (in $N$) and the functional dependence on the power are of primary interest in the latter. Desrosiers and Liu \cite{DL14} have obtained the analogous expansions for the classical Gaussian, Laguerre and Jacobi $\beta$ ensembles in the case of even moments. We give simplified working of these results --- which requires the use of duality formulas and the use of steepest descents for multidimensional integrals --- providing too an error bound on the resulting asymptotic expressions. The universality of the constant term with respect to an earlier result known for the circular $\beta$ ensemble is established, which requires writing it in a Barnes $G$ function form, while the functional dependence on the powers is related to that appearing in Gaussian fluctuation formulas for linear statistics. In the Laguerre and Jacobi cases our working can be extended to the circumstance when the exponents in the weight function are (strictly) proportional to $N$, giving results not previously available in the literature.
\end{abstract}

\vspace{3em}

\section{Introduction}
\subsection{Settings for moments of characteristic polynomials}
In random matrix theory, a $\beta$ ensemble for Hermitian matrices (all eigenvalues real) refers to the point process
specified by a probability density function (PDF) with  functional form proportional to
\begin{equation}\label{1a}
\prod_{l=1}^N w(x_l) \prod_{1 \le j < k \le N} |x_k - x_j|^\beta.
\end{equation}
The one-body factor $w(x)$ in the language of statistical mechanics, or weight function as termed in orthogonal polynomial theory, 
thus distinguishes different $\beta$ ensembles, which may be denoted ME${}_{\beta,N}[w]$.
Of particular prominence are the choices (up to possible scaling of $x$)
\begin{equation}\label{1b}
w(x) = \begin{cases} e^{-x^2}, & {\rm Gaussian} \\
x^a e^{-x} \mathbbm 1_{x > 0} ,& {\rm Laguerre} \\
x^{a_1} (1 - x)^{a_2}  \mathbbm 1_{0 < x < 1},  & {\rm Jacobi}. \end{cases},
\end{equation}
Here the terminologies for the weights are those coming from orthogonal polynomial theory. For $\beta = 1,2$ and 4 classical results in random
matrix theory give constructions in terms of matrices with independent standard Gaussian entries (real entries for $\beta = 1$,
complex entries for $\beta = 2$ and quaternion entries for $\beta = 4$ realised as particular $2 \times 2$ complex blocks) which realise the eigenvalue
PDF (\ref{1a}) with weights (\ref{1b}). For example, with $G_{n,N}$ denoting an $n \times N$ rectangular standard complex Gaussian matrix
(also known as a rectangular complex Ginibre matrix \cite{BF24}), the construction ${1 \over 2} (G_{N,N} + G_{N,N} ^\dagger)$ specifies
the Gaussian unitary ensemble  which has eigenvalues PDF given by (\ref{1a}) with $\beta = 2$ and Gaussian weight, while the construction
$G_{n,N}^\dagger G_{n,N}$ with $n \ge N$ specifies the Laguerre unitary ensemble (also known as complex Wishart matrices) 
which has eigenvalues PDF given by (\ref{1a}) with $\beta = 2$ and Laguerre weight for $a = n - N$. The product of one body factors $\prod_{l=1}^N w(x_l)$
is identified with the PDF on the matrix space, while the product over pairs $ \prod_{1 \le j < k \le N} |x_k - x_j|^\beta$ is the eigenvalue dependent factor of the
Jacobian in the change of variables from the eigenvalues to the eigenvectors.
More on these constructions
 is contained in \cite[Ch.~1 and 3]{Fo10}. In fact underlying random matrix ensembles are known for all $\beta > 0$, and continuous
Laguerre parameter $a$ and Jacobi parameters $(a_1, a_2)$, which in the Gaussian and Laguerre cases involves certain tridiagonal
matrices \cite{DE02}.

Fundamental in matrix theory the characteristic polynomial. With $\{ x_j \}_{j=1}^N$ the eigenvalues, this corresponds to the product
$\prod_{j=1}^N ( x - x_j)$, which for a random matrix is a random function. From multiple viewpoints, there is interest in moments
of this random function with respect to the PDF for $\{ x_j \}_{j=1}^N$. Perhaps the simplest result of this kind for a $\beta$ ensemble
is for the average of the characteristic polynomial in the case of a classical weight. As an explicit example, in the case of the Gaussian
$\beta$ ensemble one has \cite{BF97a,De09}
 \begin{equation}\label{R1}
  \Big  \langle   \prod_{j=1}^N (x - x_j ) \Big \rangle_{{\rm ME}_{\beta,N}[e^{-\beta \lambda^2/2}]} = 2^{-N} H_N(x),
  \end{equation}  
  where $H_N(x)$ denotes the Hermite polynomial. Underlying (\ref{R1}) is the so-called duality identity
\cite{BF97a}, \cite{De09}, \cite[Eq.~(13.162)]{Fo10}
\begin{equation}\label{R2}
    \Big  \langle    \prod_{j=1}^N ( x  -  \sqrt{\alpha} x_j )^p     \Big  \rangle_{ {\rm ME}_{2/\alpha,N}[e^{-\lambda^2}]} =
   i^{-pN} \Big  \langle   \prod_{j=1}^p  ( i  x  -  x_j)^N \Big  \rangle_{{\rm ME}_{2 \alpha,p}[e^{-\lambda^2}]}
   \end{equation}  
valid for general $\alpha > 0$; the recent work \cite{Fo25+} reviews the broader context. The result (\ref{R1}) can be obtained from (\ref{R2}) by taking
$p=1$ and using a standard one-dimensional integral formula for the Hermite polynomials on the RHS.
With $p=2/\alpha$ and even, the LHS of (\ref{R2} relates to the density of the Gaussian $\beta$ ensemble.
This is true for the general class of PDFs (\ref{1a}), as follows 
from the fact that the density  is by definition proportional to the integral over
all but one coordinate (to begin, a system of $N+1$ eigenvalues should then be taken). The significance of the
duality (\ref{R2}) is that the RHS is then well suited to asymptotic analysis, both in the bulk and at the spectrum
edge \cite{BF97a,DF06,DL14,DL15}.

Moments of characteristic polynomials of random matrices also show up as the normalisation of a random function
formed out of the power of the characteristic polynomial of a random matrix,
\begin{equation}\label{R3}
{|\det (x \mathbb I_N - X) |^\alpha \over \langle |\det (x \mathbb I_N - X) |^\alpha  \rangle_X}
  \end{equation}  
  for several prominent ensembles. One of these is the circular $\beta$ ensemble of random unitary matrices
  (CE${}_{\beta,N}$ say),
  specified by the eigenvalue PDF proportional to
 \begin{equation}\label{R3a} 
 \prod_{1 \le j < k \le N} | e^{i \theta_k} - e^{i \theta_j} |^\beta.
   \end{equation}  
   For $\beta = 2$, (\ref{R3a}) is the eigenvalue PDF for matrices from the classical group $U(N)$ chosen with Haar
   measure; see \cite{DF17} for historical remarks. From theory relating to the average traces of this ensemble
   \cite{DS94,HKO01}, it is straightforward to show formally at least that for large $N$ the random function
   $\log \det (\mathbb I_N - z U^\dagger)$ limits to the random power series $G(z) = \sum_{j=1}^\infty {z^j \over \sqrt{j}} \mathcal N_j$, with
   each $\mathcal N_j$ an independent standard complex random variable. Moreover, this random power series has for
   its covariance $\langle \overline{G(z)} G(w) \rangle = - \log (1 - \bar{z} w)$. One significance of this is, again formally
   at least, that such a log-correlated field gives a realisation of Gaussian multiplicative chaos \cite{Ka85},
   i.e.~a random measure $\mu^{(\alpha)}$ such that
    \begin{equation}\label{R3b} 
    {d \mu^{(\alpha)} \over d \mu}(z) = {e^{\alpha G(z)} \over \langle e^{\alpha G(z)} \rangle},
     \end{equation}  
     where $\mu$ is the uniform measure, and it is required that $\alpha < 1/2 $ for the measure to be non-degenerate.
     Specifically, it was  proved in \cite{We15} (see too \cite{NSW20,BF22}) that for Haar distributed matrices from $U(N)$ there is convergence,
     for $N \to \infty$, of (\ref{R3}) to (\ref{R3b}) for $\alpha>0$ and small enough. In \cite{LN24} the range of $\alpha$ values is extended
     to allow for some negative values, and convergence is established too in relation to (\ref{R3}) for the the PDF (\ref{R3a}) with
     general $\beta > 0$. Beyond the circular ensembles, convergence of (\ref{R3}) to (\ref{R3b}) has been established for
     the Gaussian unitary ensemble \cite{BWW18}, which was extended to the Gaussian orthogonal and symplectic ensembles
     in \cite{Ki24}; the case of random real orthogonal and unitary symplectic matrices is considered in \cite{FK21}.

   For the circular $\beta$ ensemble, one has the average power of the characteristic polynomial
\begin{equation}\label{R4}   
\Big \langle \prod_{l=1}^N |e^{i \phi}  - e^{i \theta_l} |^\gamma  \Big \rangle_{{\rm CE}_{\beta,N}}, \qquad - \pi < \phi \le \pi,
  \end{equation}  
  is independent of $\phi$. Its value is known as a product of gamma functions from the theory of the Selberg integral;
  see \cite[Eq.~(4.4)]{Fo10}. In the log-gas picture of (\ref{R3a}) \cite{Dy62}, the average (\ref{R4}) with $\gamma = \beta p$ has an interpretation in
  terms of the dimensionless chemical potential $\beta \mu_N^*$ due to the introduction of an impurity charge of strength $p$ at the origin,
 \begin{equation}\label{R4a} 
 e^{-   \beta \mu_N^*} = R^{-p^2 \beta / 2} \Big \langle   \prod_{l=1}^N |1 - e^{i \theta_l}|^{\beta p}
 \Big  \rangle_{{\rm CE}_{\beta,N}},
   \end{equation}  
   where $R = N/(2 \pi)$ is a scale factor with the interpretation of the radius of the dilated unit circle  as required for the average eigenvalue spacing to
   be unity. In the case of $\beta/2$ a rational number, written in the form $\beta /2 = m/n$ for $m,n$ relatively prime, it was shown in 
   \cite{Fo92j} based on the exact evaluation of the circular $\beta$ ensemble average as a product of gamma functions using Selberg integral theory (see also \cite[Exercises 4.8 q.4]{Fo10}) that
   \begin{equation}\label{3.qb1}
(2 \pi )^{-p^2 \beta} \lim_{N \to \infty}e^{-\beta \mu^*_N} \Big |_{\beta p \mapsto 2p}
=  \tilde{A}_{\beta,p} |_{\beta/2=m/n},
\end{equation}
where
\begin{equation}\label{3.qb1+}
\tilde{A}_{\beta,p} |_{\beta/2=m/n} := n^{-2 p^2/\beta}
\prod_{\nu=0}^{n-1} \prod_{\mu=0}^{m-1}
{ G^2(p/m+\nu/n-\mu/m+1) \over
G(2p/m+\nu/n-\mu/m+1) G(\nu/n-\mu/m+1) },
\end{equation}
valid for general Re$(p) > - {1 \over 2}$.
Here $G(z)$ is the Barnes $G$ function, which satisfies the recurrence involving the gamma function,
$G(z+1) = \Gamma(z) G(z)$.

Important for the present viewpoint is that in addition to exhibiting a well defined and explicit bulk scaling limit, the result (\ref{3.qb1}) can be recast as an asymptotic formula for the average in (\ref{R4a}) itself. Thus for large $N$
\begin{equation}\label{cc1}
\Big \langle   \prod_{l=1}^N |1 - e^{i \theta_l}|^{\beta p}
 \Big  \rangle_{{\rm CE}_{\beta,N}}
 = (N/2\pi)^{p^2 \beta/2}
 \tilde{A}_{\beta,p} |_{\beta/2=m/n} \Big ( 1 + {\rm o}(1) \Big ).
\end{equation}
In contrast to the bulk scaling limit in (\ref{3.qb1+}), the result (\ref{cc1}) relates to a global scaling limit, in which the domain --- the unit circle --- of finite length is kept constant while the number of eigenvalues tends to infinity. Moreover, 
in \cite{BF97f}, upon the rewrite $  \prod_{l=1}^N |1 - e^{i \theta_l}|^{\beta p} = e^{\beta p \sum_{l=1}^N \log |1 - e^{i \theta_l}|}$, the average
in (\ref{cc1}) has been given the interpretation of the characteristic function of the linear statistic $ \sum_{l=1}^N \log |1 - e^{i \theta_l}|$; see
also the recent work \cite{BD23}. 

A particularly novel consequence of (\ref{3.qb1}) in the case $\beta = 2$ (and thus $n=m=1$) and $p$ a positive integer,
when the RHS simplifies to
 \begin{equation}\label{R4b} 
 { G^2(p+1) \over
G(2p+1) },
\end{equation}
was found by Keating and Snaith \cite{KS00a} in their now celebrated work linking the characteristic polynomial of Haar unitary matrices
and the Riemann zeta function on the critical line. Specifically, the random matrix average (\ref{R4}) with $\beta = 2$ was introduced in the context
of studying the large $T$ asymptotics of the zeta function moments ${1 \over T} \int_0^T | \zeta(1/2+it)|^{2p} \, dt$. The latter are conjectured to
have the leading form $g_p (\log T)^{p^2} \alpha_p$, where $\alpha_p$ is a known number theoretic quantity involving primes, for some
$g_q$. This has been proved in the cases $p=1$ and $p=2$, where it was furthermore shown that
$g_1 = 1$ \cite{HL18}, $g_2=2$ \cite{In26}. Upon some assumptions, this form has been verified in the cases $p=3$ and $p=4$ by
Conrey and collaborators, with the values $g_3 = 42$ and $g_4 = 24024$; see \cite{CG01}. The hypothesis of Keating and Snaith identifies
the terms $g_q  (\log T)^{p^2} $ with the leading large $N$ form of (\ref{R4}) with $\beta = 2$ upon the identification $N = \log T$, and this
predicts the value (\ref{R4b}) for $g_p$. This predicted value is indeed consistent with the values of $g_1,\dots,g_4$ as listed above, derived
from  calculations in analytic number theory.

Important for motivation of the present paper is the work of Br\'ezin and Hikami \cite{BH00} on an alternative random matrices derivation of
the $g_p$. This comes about through considering not an average of the power of the characteristic polynomial
over Haar distributed matrices from $U(N)$, but rather over matrices from the
(scaled) Gaussian unitary ensemble. For this it was shown that 
for large $N$ and $p$ a positive integer, and for $\lambda$ such that $|\lambda| < 1$,
  \begin{equation}\label{6.1B}   
\Big \langle \prod_{l=1}^{N} |\lambda - x_l|^{2p} \Big \rangle_{{\rm ME}_{2,N}[e^{-2 N x^2}]} =
( \pi N \rho^{\rm W}(\lambda))^{p^2} e^{2Np( \lambda^2 - 1/2-\log \, 2)} \prod_{l=0}^{p-1} {l! \over (p + l)!} \Big ( 1 + {\rm O} \Big ( {1 \over N}  \Big )\Big ),
   \end{equation} 
   where $\rho^{\rm W}(\lambda)$ is density function for the Wigner semi-circle law, $\rho^{\rm W}(\lambda) = {2 \over \pi} (1 - \lambda^2)^{1/2} \mathbbm 1_{|\lambda|<1}$.
Here the RHS has the functional form 
\begin{equation}\label{S1a}
(N \phi_2(\lambda))^{p^2} e^{Np \phi_1(\lambda)} g_p,
\end{equation} 
for particular $\phi_1, \phi_2$, with the identification of $g_p$ as specified by (\ref{R4b})
following from the identity
 $ \prod_{l=0}^{p-1} {l! \over (p + l)!}  = (G(1+p))^2/G(1+2p)$. Hence there is a form of universality with respect to (\ref{R4b}) from the random matrix viewpoint,
 as it appears systematically in both the circular and Gaussian $\beta = 2$ computation of the asymptotic form of the averaged power of the characteristic polynomial, notwithstanding
that the latter depends on $\lambda$. 

Subsequent to  \cite{BH00}, 
 several works have rederived and extended (\ref{6.1B}), 
motivated first by its interpretation as a Hankel determinant with a singularity in the corresponding generating function symbol of Fisher-Hartwig type
(the circular ensemble average (\ref{R4}) for $\beta = 2$ relates to Toeplitz determinants of this type; see e.g.~the review \cite{Kr11})
\cite[Eq.~(21) with $m=1$, after minor correction]{Ga05},
 \cite[Th.~1 with $m=1$]{Kr07}, \cite{Ch19},
and later by
its relevance to multiplicative Gaussian chaos as noted in the paragraph including (\ref{R3}); see \cite{Fo23t} and references therein.
Moreover, the analogue of (\ref{6.1B}) (and generalisations) have been considered for the Laguerre and Jacobi unitary
 ensembles \cite{CG21}. Inspection of the results of \cite{CG21} in this setting, which moreover are proved  for general Re$\, (p) > - {1 \over 2}$, reveal the structure noted in the sentence below (\ref{6.1B}) and thus in particular with (\ref{R4b}) as a distinguished factor. 

 A generalisation of (\ref{6.1B}) which occurs in applications \cite{FFGW03} has the
 power of a single characteristic polynomial replaced by $R$ such terms being averaged;
 thus $\prod_{l=1}^{N} |\lambda - x_l|^{2p}$ is replaced by $\prod_{i=1}^R\prod_{l=1}^{N} |\lambda_i - x_l|^{2p_s}$. The corresponding large $N$ asymptotics was conjectured in
 \cite{FF04} to contain as factors the RHS of (\ref{6.1B}) with $p \mapsto p_i$ for each
 $i=1,\dots,R$, a structure which has subsequently been rigorously established in
 \cite{Ga05,Kr07,Ch19}. The  main significance from the present viewpoint is then the presence of several factors of the form (\ref{R4b}), each of which can be distinguished within the asymptotic expansion.

 The most significant extension of (\ref{6.1B})  guiding the present work is a particular $\beta$ generalisation obtained by Desrosiers and Liu \cite[Gaussian case of Th.~1.2 with $m=p$, $u = \lambda$, $s_1 = \cdots = s_{2p} =0$, and a scaling of the variables $x_l \mapsto \sqrt{2N} x_l$ in the multiple integrals corresponding to $\phi_N(\sqrt{2N}u,\dots,\sqrt{2N}u)$]{DL14},
 \begin{align}\label{asy-Gau}
            \begin{aligned}
			 \left\langle\prod_{l=1}^N |\lambda-x_l|^{2 p}\right\rangle_{\mathrm{ME}_{\beta, N}\left[e^{-\beta N x^2}\right]} 
			\sim A_{\beta, p}\left(\pi \rho^{\mathrm{W}}(\lambda)\right)^{p(2-\beta) / \beta}\left(\pi N \rho^{\mathrm{W}}(\lambda)\right)^{2 p^2 / \beta} e^{2 N p\left(\lambda^2-1 / 2-\log 2\right)},
		\end{aligned}
        \end{align}
        valid for $p$ a positive integer, where
        \begin{align}\label{A-the-const}
		A_{\beta, p}=\binom{2p}{p}\prod_{j=1}^{p}{\Gamma\left(1+{2j\over\beta}\right)\over\Gamma\left(1+{2(j+p)\over\beta}\right)} 
    \end{align}
 and $\rho^{\rm W}(\lambda)$ is the global density Wigner semi-circle law for the Gaussian $\beta$ ensemble given by \eqref{semi-cir-law} below.
 We also have from \cite[Th.~1.2]{DL14} that in the Laguerre case when the parameter $a$ is independent with $N$, we have
        \begin{multline}\label{asy-Lag}
			\left\langle\prod_{l=1}^N |\lambda-x_l|^{2 p}\right\rangle_{\mathrm{ME}_{\beta, N}\left[x^{a}e^{-2\beta N x}\right]} \sim A_{\beta, p}\left({\pi\over 2} \rho^{\mathrm{MP}}(\lambda)\right)^{p(2-\beta) / \beta}\left({\pi \over 2}N \rho^{\mathrm{MP}}(\lambda)\right)^{2 p^2 / \beta} \\
			\times(4\lambda)^{2pa/\beta}e^{2 pN \left(2\lambda-1 -2\log 2\right)},
		\end{multline}
        and in the Jacobi case when the parameters $a_1,a_2$ are independent with $N$, 
            \begin{multline}\label{asy-Jac}
			\left\langle\prod_{l=1}^N |\lambda-x_l|^{2 p}\right\rangle_{\mathrm{ME}_{\beta, N}\left[x^{a_1}(1-x)^{a_2}\right]} \sim A_{\beta, p}\left(\frac{\pi}{4}\rho^{\rm J}(\lambda)\right)^{p(2-\beta)/\beta
            } \left(\frac{\pi}{2 } N\rho^{\rm J}(\lambda)\right)^{2 p^2/\beta}\\
			\times (4 \lambda)^{-2pa_1/\beta}(4-4 \lambda)^{-2p a_2/\beta}  e^{-4 p N \ln 2} .
        \end{multline}
        In these formulas, valid for $p$ a positive integer, 
        $\rho^{\rm MP}(\lambda)\text{ and } \rho^{\rm J}(\lambda) $ are the global densities of the $\beta$  Laguerre (parameter $a$ fixed) and Jacobi (parameters $a_1$ and $a_2$ fixed) ensembles, given by  \eqref{Mar-Pas-law} and \eqref{arc-sin-law} below respectively. Furthermore, it is assumed that $\lambda$ is restricted to the support of those densities.
 
 \subsection{Aims and outline}
Comparing (\ref{6.1B}) to the asymptotic formulas (\ref{asy-Gau}), (\ref{asy-Lag}) and 
(\ref{asy-Jac}), we see that missing from the latter is informative bounds on the remainder terms. Considering the Gaussian case for example, one sees that in \cite{DL14} underpinning the derivation of (\ref{asy-Gau}) is the generalisation of (\ref{R2})
\cite{De09}
 \begin{equation}\label{2.5b}
 \Big ({1 \over i} {\sqrt{2 \over \beta}}  \Big  )^{pN/2} \Big \langle  \prod_{l=1}^p \det \Big ( i \sqrt{\beta \over 2} \nu_l\mathbb I_N - H \Big ) \Big \rangle_{{\rm ME_{\beta,N}[e^{-\lambda^2};\boldsymbol \mu}]} =
\Big  \langle  \prod_{j=1}^N \det \Big ( i \sqrt{2 \over \beta} \mu_j  \mathbb I_p - H \Big ) \Big \rangle_{{\rm ME_{{4 \over \beta},p}[e^{-\lambda^2};\boldsymbol \nu}]},   
    \end{equation} 
Here  ${\rm ME_{\beta,N}[e^{-\lambda^2};\boldsymbol \mu}]$ refers to the Gaussian $\beta$ ensemble with a source (for more on this see \cite[\S 3.5]{Fo25+} and references therein. It involves $N$ auxiliary variables $\boldsymbol \mu = (\mu_1,\dots,\mu_N)$ which in the functional form defining the probability density function involves a certain generalised hypergeometric function, based on Jack polynomials, of two sets of variables. However, if $\boldsymbol \mu =
(c,\dots,c)$, i.e.~all auxiliary variables are equal, and  $\boldsymbol \nu = \mathbf 0$, there is much simplification and (\ref{2.5b}) becomes equivalent to (\ref{R2}). Our first point is that by making such a specialisation at the beginning of the calculation, the required working of the asymptotic analysis is considerably simpler, allowing in particular for the derivation of error term bounds to supplement
(\ref{asy-Gau}), (\ref{asy-Lag}) and 
(\ref{asy-Jac}).

We state this result in Proposition 
\ref{P2.1}. In all three cases, enabling our working are duality formulas,
expressing the average over the ensemble with $N$ eigenvalues of the
 $2p$-th power of the characteristic polynomial as an average over an ensemble with $2p$ eigenvalues of the  $N$-th power of the characteristic polynomial. The duality formula we use in the Gaussian case is (\ref{R2}).
 Common to these dualities is that the right hand side can be 
 expressed as a $2p$-dimensional integral
\begin{align}\label{RN}
    R_{N,\beta}[f;\mathcal{C}]=\int_{\mathcal{C}}\dots\int_{\mathcal{C}}\prod_{j=1}^{2p}e^{Nf(u_j,\lambda)}du_j\prod_{1\leq j<k\leq 2p}|u_j-u_k|^{{\frac{4}{\beta}}}
\end{align}
times some normalization constants that can be expressed by gamma function using the Selberg integral. Here, the integration contour $\mathcal{C}$ depends on the duality formula and is the real line for the Gaussian case and the unit circle for the Laguerre and Jacobi case. Steepest descent analysis \cite{DF06,DL14,DL15} is then applied to the multiple integral $R_{N,\beta}[f;\mathcal{C}]$ to obtain its asymptotic behavior.
In \S \ref{S3}, we analyze the moments of characteristic polynomials of Gaussian $\beta$ ensemble using the duality
(\ref{R2}), which serves as a typical example for applying the steepest descent method to obtain our refinement of the results of Desoriers and Liu \cite{DL14} by providing a bound of the order of the remainder. The detailed procedures for applying steepest descent method are presented, and the asymptotic formula \eqref{asy-Gau} is subsequently derived via this line of analysis. In the subsequent sections this analysis is extended to the Laguerre and Jacobi $\beta$ ensemble cases.

 In the setting of the circular ensemble and thus for the average (\ref{R4a}), the $\beta$ generalisation of (\ref{R4b}) 
 (albeit restricted to $\beta$ rational with $\beta/2 = m/n$)
 is given by the RHS of (\ref{3.qb1}). Given the circumstance for $\beta = 2$ as discussed above, we pose the question as to (\ref{3.qb1}) being a distinguished (and thus universal) factor in the large $N$ form of the analogous power of the characteristic polynomial average for the classical Gaussian, Laguerre and Jacobi $\beta$ ensembles. With knowledge of  the results (\ref{asy-Gau}), (\ref{asy-Lag}) and  (\ref{asy-Jac}), this question then reduces to showing the equivalence of (\ref{3.qb1+}) 
 with $p$ a positive integer, and
 (\ref{A-the-const}) with $\beta$ a positive rational in reduced form. The essential identity accomplishing  this is established in Proposition \ref{G-gam-eq}. In \S \ref{S2} we also highlight structural features of the asymptotic
expansions (\ref{asy-Gau}), (\ref{asy-Lag}) and  (\ref{asy-Jac})
beyond the universal factor, specifically as they relate to Gaussian fluctuation formulas for linear statistics of $\beta$ ensemble averages in the global limit.

In the
Laguerre and Jacobi $\beta$ ensemble cases 
there are distinct asymptotic formulas, when the exponents in the weights are strictly proportional to $N$, which only appear in the main body of the paper (Propositions \ref{pro-LbE2}
and \ref{P5.2}). Writing the Laguerre exponent $a=\beta(n-N+1)/2 - 1$, and the Laguerre exponents as $a_1 = \beta (n_1 - N+1)/2 -1$, $a_1 = \beta (n_1 - N+1)/2 -1$,
where $n,n_1,n_2 \ge N$ (this decomposition is motivated by considerations in multivariate statistics
(see e.g.~\cite[Props.~3.22 and 3.6.1]{Fo10}), and then setting $n,n_1,n_2$ to be proportional to $N$ with proportionality constants greater than one, the asymptotics have previously been presented in \cite[Th.~4.3]{DL14}. Note however that except for $\beta = 2$ this setting differs from ours as we take $a,a_1,a_2$ to each be strictly proportional to $N$ (there is then no additional order one term). Moreover, the accuracy of the statement of \cite[Th.~4.3]{DL14} (for which the proof is only outlined) is in some doubt; see Remarks \ref{R-C1}.2 and \ref{R-C2}.2 below.
We make several checks on our results.
Thus, in the Appendix, the asymptotic formulas are compared with those for the global density of $\beta$ ensembles, in accordance with the inter-relationship as noted in the text below (\ref{R2}), and also, in the special case $\beta = 2$, with previously obtained results from \cite{CG21}.

\section{Refinement of the asymptotic formulas and special features}\label{S2}

\subsection{An error bound for the  Desrosiers--Liu asymptotic formulas}
By specialising the duality formulas used by Desrosiers and Liu \cite{DL14},
we are able to specify
the large $N$ asymptotics, up to a multiplicative error term
\begin{equation}\label{x1}
\left(1+\mathrm{O}\left( N^{-\mathrm{min}\{2/\beta,1\}}\right)\right).
\end{equation} 
 for averaged even powers of the  characteristic polynomials
in the Gaussian, Laguerre and Jacobi $\beta$-ensembles. The details are given in \S \ref{S3}, \ref{Lag-sec} and \ref{Jac-sec} respectively.
We state our result here, with the assumption for the latter two ensembles that the parameters in the weights are independent of $N$. In \S 4.2 we extend this to the Laguerre case with $a$ proportional to $N$, and in \S 5.2 to the Jacobi case with $a_1, a_2$ proportional to $N$.

\begin{prop}\label{P2.1}
Consider the asymptotic expansions
(\ref{asy-Gau}), (\ref{asy-Lag}) and 
(\ref{asy-Jac}) for the even moments of characteristic polynomials for the classical Gaussian, Laguerre and Jacobi 
$\beta$ ensembles. In each case a bound on the remainder associated with these expansions is obtained by inserting on the the right hand side the multiplicative factor (\ref{x1}).

\end{prop}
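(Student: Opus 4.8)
The plan is to establish the three asymptotic expansions (\ref{asy-Gau}), (\ref{asy-Lag}) and (\ref{asy-Jac}) with the explicit error control (\ref{x1}) by reducing each average to a $2p$-dimensional integral of the form $R_{N,\beta}[f;\mathcal{C}]$ in (\ref{RN}) and then carrying out a careful steepest descent analysis that tracks the remainder. First I would apply the relevant duality formula — (\ref{R2}) in the Gaussian case, and its Laguerre and Jacobi analogues in the other two — to rewrite the average of the $2p$-th power of the characteristic polynomial over the $N$-eigenvalue ensemble as an average of the $N$-th power over an ensemble with only $2p$ eigenvalues. Crucially, I would invoke the specialisation discussed below (\ref{2.5b}), taking all auxiliary source variables equal and the dual source to be zero, so that the dual average collapses to precisely the integral $R_{N,\beta}[f;\mathcal{C}]$ times a Selberg-type normalisation expressible through gamma functions. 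The contour $\mathcal{C}$ is the real line in the Gaussian case and the unit circle in the Laguerre and Jacobi cases.

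\medskip

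Next I would locate the saddle points of the exponent $f(u,\lambda)$ for each ensemble. Since $\lambda$ is assumed to lie strictly inside the support of the relevant limiting density ($\rho^{\rm W}$, $\rho^{\rm MP}$, or $\rho^{\rm J}$), the equation $\partial_u f(u,\lambda)=0$ has a pair of complex-conjugate saddles off the contour, whose imaginary part is governed by the density value at $\lambda$. I would deform $\mathcal{C}$ through the steepest descent paths attached to these saddles. The leading contribution then factorises into (i) the exponential factor $e^{Np\phi_1(\lambda)}$ coming from the value of $f$ at the saddles, (ii) an algebraic-in-$N$ factor carrying the powers $p(2-\beta)/\beta$ and $2p^2/\beta$ of $\pi\rho(\lambda)$ and $\pi N\rho(\lambda)$, produced jointly by the Gaussian (quadratic) fluctuations about each saddle and by the Vandermonde-type interaction $\prod_{j<k}|u_j-u_k|^{4/\beta}$ evaluated at the saddle configuration, and (iii) the constant $A_{\beta,p}$ of (\ref{A-the-const}), which emerges once the Selberg normalisation constants are combined with the Gaussian integrals over the transverse fluctuation coordinates. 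I would verify that the combinatorial and gamma-function bookkeeping reproduces $A_{\beta,p}$ exactly in each of the three cases.

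\medskip

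The error term is where the substantive work lies, and it is the part that distinguishes this result from \cite{DL14}. Around each saddle I would rescale the fluctuation variables by the natural width $N^{-1/2}$ and Taylor expand $f$ beyond quadratic order. Two sources of correction must be quantified: the cubic-and-higher terms in the single-variable expansion of $Nf$, which contribute relative corrections of order $N^{-1}$, and the expansion of the interaction factor $\prod_{j<k}|u_j-u_k|^{4/\beta}$ away from its value at the coincident saddle configuration, which contributes relative corrections of order $N^{-2/\beta}$ (this being the scale at which the $4/\beta$-power interaction feels the $N^{-1/2}$-scale fluctuations). Taking the worse of the two gives the stated $\mathrm{O}(N^{-\min\{2/\beta,1\}})$. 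I expect the main obstacle to be the rigorous control of the tails of the deformed contour integral away from the saddles — one must show that the contribution of $\mathcal{C}$ outside a neighbourhood of the saddles is exponentially suppressed relative to the leading term, uniformly in the relevant parameters, and that the neighbourhood can be chosen so that the Taylor remainders are dominated as claimed. I would handle this by a standard Laplace/Watson argument bounding $\mathrm{Re}\,f$ along the steepest descent path below its saddle value by a definite margin, together with an estimate showing the interaction factor grows only polynomially along the contour and hence cannot overturn the exponential decay. Since the three ensembles share the structure (\ref{RN}), once the Gaussian case is treated in full in \S\ref{S3} the Laguerre and Jacobi cases in \S\ref{Lag-sec} and \S\ref{Jac-sec} follow by the same scheme with only the explicit form of $f$, the contour, and the saddle locations changed.
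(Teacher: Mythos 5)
Your overall architecture agrees with the paper's (duality formula, reduction to $R_{N,\beta}[f;\mathcal{C}]$, steepest descent on the $2p$-fold integral), but the error analysis --- which is the entire content of Proposition \ref{P2.1} over and above \cite{DL14} --- misidentifies the source of the $N^{-2/\beta}$ term, and this is a genuine gap. In the paper's proof (Proposition \ref{sd-prop-G}), after the contour deformation each of the $2p$ integration variables may pass through \emph{either} of the two complex-conjugate saddles $u_\pm$, and since $\mathrm{Re}\,f(u_+,\lambda)=\mathrm{Re}\,f(u_-,\lambda)$, every split with $n$ variables at $u_+$ and $2p-n$ at $u_-$ carries the \emph{same} exponential weight: the unbalanced configurations are suppressed only algebraically, not exponentially. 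Counting within-group Vandermonde pairs (each contributing a factor $(2/NR)^{2/\beta}$ upon the $N^{-1/2}$ rescaling) shows the split $(p+k,p-k)$ has exactly $k^2$ more such pairs than the balanced split, whence $(S_{p+k}+S_{p-k})/S_p=\mathrm{O}(N^{-2k^2/\beta})$; the $k=1$ terms produce the $N^{-2/\beta}$ in (\ref{x1}), and only the balanced split survives at leading order, generating the factor $\binom{2p}{p}$ in (\ref{RNb}) and hence in $A_{\beta,p}$. Your scheme keeps a single saddle configuration and attributes $N^{-2/\beta}$ to Taylor-expanding the interaction $\prod_{j<k}|u_j-u_k|^{4/\beta}$ about the coincident configuration; but that expansion gives cross-saddle corrections $(u_j-u_k)^{4/\beta}=(u_+-u_-)^{4/\beta}+\mathrm{O}(N^{-1/2})$, i.e.\ order $N^{-1/2}$ per term, not $N^{-2/\beta}$, so your argument produces neither the binomial factor nor the correct error exponent. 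Relatedly, your assertion that the cubic-and-higher terms of $Nf$ contribute $\mathrm{O}(N^{-1})$ is true only because the $N^{-1/2}$-order terms are odd symmetric polynomials in the rescaled variables $y_j$ and vanish upon integration against the even weight --- a cancellation the paper invokes explicitly and you never establish; without it the naive bound is $\mathrm{O}(N^{-1/2})$, which destroys (\ref{x1}) for every $\beta<4$.

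A secondary but real omission: you cannot simply "deform $\mathcal{C}$ through the steepest descent paths," because the factor $\prod_{j<k}|u_j-u_k|^{4/\beta}$ is non-analytic for generic $\beta$. The paper first applies Lemma \ref{lem1} (real line) or Lemma \ref{lem2} (unit circle) to replace the absolute values by ordered analytic powers on nested contours --- in the circular case this also modifies the phase function to $\tilde{f}(u,x)=f(u,x)-\frac{2(2p-1)}{\beta N}\ln u$ and introduces the sign $(-1)^{2p(2p-1)/\beta}$ appearing in Proposition \ref{sd-prop-LJ} --- and then applies Lemma \ref{lem1} once more to recognise the ordered Gaussian blocks as the Selberg integrals $\Gamma_{n,\beta}$ of (\ref{Gam-nb}). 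Your tail estimates away from the saddles and your use of the specialised dualities (\ref{R2}), (\ref{dualf-L}), (\ref{dua-for-J1}) are consistent with the paper, but as written the proposal does not prove the stated remainder bound.
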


\subsection{Barnes $G$ function form of the constant term}
For $\beta/2$ rational, the quantity $A_{\beta,p}$ as specified by (\ref{A-the-const}) can be
rewritten to give precisely the functional form on the RHS of (\ref{3.qb1}). This
identifies the latter, first derived in the context of the circular $\beta$ ensemble
\cite{Fo92j}, as the distinguished, universal factor appearing in the asymptotic expansion of
the $2p$-th power of the characteristic polynomial for all the classical $\beta$ ensembles.

\begin{prop}\label{G-gam-eq}
    Suppose that $m,n$ and $p$ are integers, then we have
\begin{align}\label{1.18}
    \prod_{j=1}^{p}\Gamma\left(s+\frac{n}{m}j\right)=n^{-\frac{p}{2}+sp+\frac{np}{2m}(1+p)}(2\pi)^{-\frac{p(n-1)}{2}}\prod_{l=0}^{n-1}\prod_{j=1}^{m}\frac{G\left(\frac{s+l}{n}+\frac{j+p}{m}\right)}{G\left(\frac{s+l}{n}+\frac{j}{m}\right)}.
\end{align}
where $\Gamma$ and $G$ are gamma and Barnes $G$-functions respectively. Make use of the same notation as in (\ref{3.qb1}) so that  $\tilde{A}_{\beta,p} |_{\beta/2=m/n}$ is given by (\ref{3.qb1+}). Then we have
\begin{equation}\label{1.18a}
A_{\beta,p} \Big |_{\beta/2=m/n} = \tilde{A}_{\beta,p} \Big |_{\beta/2=m/n}.
\end{equation}
\end{prop}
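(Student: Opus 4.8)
The plan is to establish the pure special-function identity (\ref{1.18}) first, and then to obtain (\ref{1.18a}) as a bookkeeping consequence of it, the recurrence $G(z+1)=\Gamma(z)G(z)$, and the Gauss multiplication formula for the gamma function, $\prod_{k=0}^{n-1}\Gamma(z+\tfrac{k}{n})=(2\pi)^{(n-1)/2}n^{1/2-nz}\Gamma(nz)$.

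For (\ref{1.18}) I would start from the double product on the right-hand side and show it collapses onto the left-hand side. Fixing $l$ and writing $a=\tfrac{s+l}{n}$, the inner product over $j$ telescopes: after shifting the summation index, $\prod_{j=1}^{m}G(a+\tfrac{j+p}{m})/G(a+\tfrac{j}{m})=\prod_{j=m+1}^{m+p}G(a+\tfrac{j}{m})\big/\prod_{j=1}^{p}G(a+\tfrac{j}{m})$, and applying $G(w+1)=\Gamma(w)G(w)$ to each numerator factor (whose argument exceeds that of a matching denominator factor by exactly one) leaves precisely $\prod_{i=1}^{p}\Gamma(a+\tfrac{i}{m})$. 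Hence the whole Barnes-$G$ double product equals $\prod_{l=0}^{n-1}\prod_{i=1}^{p}\Gamma(\tfrac{s+l}{n}+\tfrac{i}{m})$. Reorganising the order of multiplication and applying the Gauss multiplication formula to the product over $l$ (with $z=\tfrac{s}{n}+\tfrac{i}{m}$, so that $nz=s+\tfrac{n}{m}i$) yields $\prod_{i=1}^{p}\Gamma(s+\tfrac{n}{m}i)$ up to explicit powers of $n$ and $2\pi$. A direct check then shows these powers are cancelled exactly by the prefactor $n^{-p/2+sp+\frac{np}{2m}(1+p)}(2\pi)^{-p(n-1)/2}$ appearing in (\ref{1.18}), completing the proof of that identity.

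For (\ref{1.18a}) I would set $2/\beta=n/m$ and first rewrite the gamma quotient in (\ref{A-the-const}). Since $\prod_{j=1}^{p}\Gamma(1+\tfrac{n}{m}(j+p))=\prod_{j=1}^{2p}\Gamma(1+\tfrac{n}{m}j)\big/\prod_{j=1}^{p}\Gamma(1+\tfrac{n}{m}j)$, the constant becomes $A_{\beta,p}=\binom{2p}{p}\,P_{p}^{2}/P_{2p}$ with $P_{k}:=\prod_{j=1}^{k}\Gamma(1+\tfrac{n}{m}j)$. Applying (\ref{1.18}) with $s=1$ to $P_{p}$ and to $P_{2p}$ converts each into a Barnes-$G$ double product times elementary factors; the $(2\pi)$-powers cancel in the ratio $P_{p}^{2}/P_{2p}$, and the surviving power of $n$ is exactly $n^{-np^{2}/m}=n^{-2p^{2}/\beta}$, matching the prefactor of $\tilde{A}_{\beta,p}$ in (\ref{3.qb1+}). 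It then remains to compare the two Barnes-$G$ double products.

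The main obstacle, and the step needing the most care, is this final comparison. The base arguments of the $G$-factors coming from $P_{p}^{2}/P_{2p}$ range over $\{\tfrac{1+l}{n}+\tfrac{j}{m}\}$ with $1+l\in\{1,\dots,n\}$, whereas those in $\tilde{A}_{\beta,p}$ range over $\{1+\tfrac{\nu}{n}-\tfrac{\mu}{m}\}$, which after the substitution $j=m-\mu$ becomes $\{\tfrac{\nu}{n}+\tfrac{j}{m}\}$ with $\nu\in\{0,\dots,n-1\}$. These two multisets coincide term by term except at the boundary, where $\tfrac{1+l}{n}=1$ (at $l=n-1$) replaces $\tfrac{\nu}{n}=0$ (at $\nu=0$), so that the base argument is shifted by exactly one. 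Applying $G(z+1)=\Gamma(z)G(z)$ to these boundary terms produces the correction factor $\prod_{j=1}^{m}\Gamma^{2}(\tfrac{j+p}{m})\big/\big(\Gamma(\tfrac{j}{m})\Gamma(\tfrac{j+2p}{m})\big)$, so that $A_{\beta,p}/\tilde{A}_{\beta,p}$ equals $\binom{2p}{p}$ times this product. Evaluating each of the three gamma products by the Gauss multiplication formula (using $\Gamma(1+p)=p!$ and $\Gamma(1+2p)=(2p)!$) collapses the correction factor to $(p!)^{2}/(2p)!=\binom{2p}{p}^{-1}$, which cancels the binomial prefactor and gives $A_{\beta,p}=\tilde{A}_{\beta,p}$. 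I expect the tracking of the shifted index ranges and the boundary terms, rather than any conceptual difficulty, to be where errors are most likely to arise.
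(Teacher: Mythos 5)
Your proposal is correct, and both halves check out, but your bookkeeping differs from the paper's in ways worth noting. For the identity (\ref{1.18}) the paper first proves the $n=1$ case by a telescoping argument restricted to $p\leq m$ and then extends to all $p$ by induction in blocks of size $m$, before lifting to general $n$ with the multiplication formula (\ref{gam-mul-for}); your multiset reindexing $\prod_{j=1}^{m}G(a+\tfrac{j+p}{m})/G(a+\tfrac{j}{m})=\prod_{j=m+1}^{m+p}G(a+\tfrac{j}{m})\big/\prod_{j=1}^{p}G(a+\tfrac{j}{m})$ is valid for every positive integer $p$ at once (both sides reduce to the quotient with numerator indices $\{p+1,\dots,p+m\}$ and denominator $\{1,\dots,m\}$), so it removes the induction entirely --- a genuine streamlining, and the Gauss-multiplication step with $z=\tfrac{s}{n}+\tfrac{i}{m}$ reproduces the stated prefactor exactly. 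For (\ref{1.18a}) the two routes diverge more: the paper first uses $\Gamma(1+x)=x\Gamma(x)$ to absorb the binomial $\binom{2p}{p}$ into the gamma quotient, rewriting $A_{\beta,p}=\prod_{j=1}^{p}\Gamma(2j/\beta)/\Gamma(2(j+p)/\beta)$, and then applies (\ref{1.18}) with $s=0$ and $s=2p/\beta$; with these choices the Barnes-$G$ bases $\tfrac{l}{n}$ agree on the nose with those of $\tilde A_{\beta,p}$ (after the substitution $\mu=m-j$), so no boundary terms arise at all. Your choice $s=1$ keeps the binomial in play and creates the shifted bases $\tfrac{1+l}{n}\in\{\tfrac1n,\dots,1\}$ versus $\tfrac{\nu}{n}\in\{0,\dots,\tfrac{n-1}{n}\}$, which you correctly repair: the correction factor $\prod_{j=1}^{m}\Gamma^{2}(\tfrac{j+p}{m})\big/\big(\Gamma(\tfrac{j}{m})\Gamma(\tfrac{j+2p}{m})\big)$ does evaluate, by three applications of (\ref{gam-mul-for}), to $(p!)^{2}/(2p)!$ (the powers of $m$ and of $2\pi$ cancel), killing the binomial, and the residual power of $n$ is $n^{-np^2/m}=n^{-2p^2/\beta}$ as you claim. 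So your argument is sound throughout; the paper's specialisation $s=0$, $s=2p/\beta$ buys a cleaner endgame with no boundary correction, while your version buys a simpler, induction-free proof of (\ref{1.18}).
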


\begin{proof}
In relation to (\ref{1.18}),
 the case when $n=1$ will be established first. Extension to general $n$ can then be carried out using the gamma function multiplication formula
\begin{equation}\label{gam-mul-for}
    \prod_{k=0}^{n-1} \Gamma\!\left(z + \frac{k}{n}\right) = (2\pi)^{\frac{n-1}{2}} \, n^{\frac{1}{2} - nz} \, \Gamma(nz).
\end{equation}
The claimed identity for $n=1$ reads
\begin{align}\label{gam-Bar-for}
    \prod_{j=1}^p \Gamma\left(s+\frac{j}{m} \right)=\prod_{j=1}^{m} \frac{G\left(s+\frac{1}{m}(j+p)\right)}{G\left(s+\frac{j}{m} \right)}.
\end{align}
    In this we first consider the case $p \leqslant m$. Some factors in the product on the RHS cancel out, and we have
$$
\prod_{j=1}^{m} \frac{G\left(s+\frac{1}{m}(j+p)\right)}{G\left(s+\frac{j}{m} \right)}=\prod_{j=1}^p \frac{G\left(s+\frac{j}{m} +1\right)}{G\left(s+\frac{j}{m}\right)}=\prod_{j=1}^p \Gamma\left(s+\frac{j}{m}\right),
$$
where we have used the functional equation $G(z+1)=\Gamma(z) G(z)$. Suppose now 
(\ref{gam-Bar-for}) 
is true for $p \leqslant km$ for some $k \in \mathbb{N}$. Then for $p \leqslant (k+1)m$, we have
\begin{multline*}
\prod_{j=1}^p \Gamma\left(s+\frac{j}{m}\right)=\prod_{j=1}^{m} \Gamma\left(s+\frac{j}{m}\right) \prod_{j=1}^{p-m} \Gamma\left(s+1+\frac{j}{m}\right) \\
=\prod_{j=1}^{m} \frac{G\left(s+1+\frac{j}{m}\right)}{G\left(s+\frac{j}{m}\right)} \cdot \prod_{j=1}^{m} \frac{G\left(s+\frac{1}{m}(j+p)\right)}{G\left(s+\frac{j}{m}\right)} 
=\prod_{j=1}^{m} \frac{G\left(s+\frac{1}{m}(j+p)\right)}{G\left(s+\frac{j}{m}\right)} .
\end{multline*}
Therefore, the formula \eqref{gam-Bar-for} holds for any integer $p$ by induction. Finally, according to \eqref{gam-mul-for} we have
\begin{align*}
    \prod_{j=1}^{p}\Gamma\left(s+\frac{n}{m}j\right)&=n^{-\frac{p}{2}+sp+\frac{np}{2m}(1+p)}(2\pi)^{-\frac{p(n-1)}{2}}\prod_{l=0}^{n-1}\prod_{j=1}^{p}\Gamma\left(\frac{s+l}{n}+\frac{j}{m}\right)\\
    &=n^{-\frac{p}{2}+sp+\frac{np}{2m}(1+p)}(2\pi)^{-\frac{p(n-1)}{2}}\prod_{l=0}^{n-1}\prod_{j=1}^{m}\frac{G\left(\frac{s+l}{n}+\frac{j+p}{m}\right)}{G\left(\frac{s+l}{n}+\frac{j}{m}\right)},
\end{align*}
where use has been made of \eqref{gam-Bar-for} in the second equality.

Consider now (\ref{1.18a}). In relation to the LHS, use of the functional equation for the gamma function gives the equivalent form to (\ref{A-the-const})
$$
A_{\beta,p} = \prod_{j=1}^p 
{\Gamma\left({2j\over\beta}\right)\over\Gamma\left({2(j+p)\over\beta}\right)}.
$$
With this as the starting point, use of (\ref{1.18}) with $s=0$, and with $s=2p/\beta$
establishes the functional form on the RHS of (\ref{3.qb1}) (which we
are denoting $\tilde{A}_{\beta,p} |_{\beta/2=m/n}$), as required.
\end{proof}


\begin{remark} ${}$ \\
1.~In the case $\beta$ rational, the functional form (\ref{3.qb1}) specifying the RHS of 
(\ref{1.18a}) no longer requires that $p$ be a
positive integer. Using this form, it is then reasonable to expect that the asymptotic formulas of Proposition \ref{G-gam-eq} are valid for all Re$(p) > - {1 \over 2}$. \\
2.~For reasons ranging from the study of Gaussian multiplicative choas, to their relation to
integrable systems, to potential theory, and the two-dimensional one component Coulomb gas,
the averaged powers of moments of the characteristic polynomial for Ginibre type
(eigenvalue support in the complex plane) random matrices has attracted attention
\cite{WW19,DS22,AKS21,SSD23,By24,BFL25,BCMS25,BY25}.
In particular, from \cite[Th.~1.1]{WW19}, for the complex Ginibre ensemble GinUE (each member is an $N \times N$ random matrix with independent standard complex Gaussian entries; see the recent monograph \cite{BF24}), scaled so that the limiting eigenvalues support is the unit disk $|\lambda|<1$ (we denote this scaled ensemble by GinUE${}^*$),
one has that for large $N$, $|z|<1$ and
Re$\,(p) > -1$,
\begin{equation}\label{G1}
\Big \langle | \det (z - G) |^{2p} \Big \rangle_{{\rm GinUE}{}^*}=
N^{{p^2 / 2} }
e^{p N (|z|^2 - 1)}
{(2 \pi)^{p / 2} \over G(1 + p)}\Big (1 + {\rm o}(1) \Big );
\end{equation}
cf.~(\ref{6.1B}). In line with our finding
(\ref{1.18a})
is that the very same factor $(2 \pi)^{p \over 2}/G(1 + p)$ appears when the ensemble
GinUE${}^*$ is changed to the ensemble TrUE${}_{n,N}$ say of
$N \times N$ sub-unitary matrices formed by
deleting $n$ rows and columns of an $(n+N) \times (n+N)$ Haar distributed unitary column. Thus,
with $N/(n+N) = \mu$ and for $p \in \mathbb N$, we have from
\cite{SSD23} that for large $N,n$
\begin{multline}\label{7.Y1a} 
   \langle | \det (z \mathbb I_N -  Z) |^{2p} \rangle_{Z \in {\rm TrUE}_{N,K}}   \\= N^{p^2/2}\mu^{Nk} 
   \bigg ( {1 - \mu \over 1 - | z|^2} \bigg )^{N p (1 - 1/\mu)} 
   \bigg ( {\sqrt{1 - \mu} \over 1 - | z|^2} \bigg )^{p^2}  
   {(2 \pi )^{p/2} \over G(1+p)} \Big ( 1 + {\rm o}(1) \Big ), \quad |z| < |\mu|^{1/2}.
   \end{multline}    
\\

\end{remark}

\subsection{Gaussian fluctuation formula structural features}
It was previously remarked that the asymptotic formula for the averaged even powers of the
characteristic polynomial for the GUE has the structure (\ref{S1a}). With $\phi_1, \phi_2$ unchanged,  inspection of (\ref{asy-Gau}) shows that for the Gaussian $\beta$ ensemble the corresponding structural formula is
\begin{equation}\label{S1b}
(\phi_2(\lambda))^{p (2 - \beta)/\beta}
(N \phi_2(\lambda))^{2p^2/\beta} e^{Np \phi_1(\lambda)} A_{\beta,p}.
\end{equation} 
Here the dependence on $p, \beta$ can be recognised as being precisely the same as
in the  asymptotic formula known for the characteristic function of 
a smooth linear statistic $\sum_{l=1}^N a(x_l)$ (i.e.~$a(x)$ is smooth on
the support of the eigenvalue density $|\lambda| \le 1$). Thus in this setting,
for (continuous) $|k|$ small enough, and for a restricted class of smooth $a(x)$ 
(real analytic on $|\lambda| < 1$) which
furthermore do not increase too fast at infinity, it is known that for large $N$ \cite{Jo98,BG11} 
\begin{multline}\label{S1c}
\log \Big \langle  e^{k \sum_{l=1}^N a(x_l)} \Big \rangle_{\mathrm{ME}_{\beta, N}\left[e^{-\beta N x^2}\right]} =
k N \int_{-1}^1 a(x) \rho^{\rm W}(x) \, dx + 
  k \Big ( {1 \over \beta} - {1 \over 2} \Big ) \int_{-\infty}^\infty \Big (  \delta(x - 1) + \delta(x+1) \\ - {1 \over \pi} {1 \over (1 - x^2)^{1/2}} \mathbbm 1_{|x|<1}\Big ) a(x) \, dx 
+ {k^2 \over \beta} \sum_{n=1}^\infty n a_n^2 + {\rm o}(1),
  \end{multline}
where $a_n :=  {1 \over \pi} \int_0^\pi a(\cos \theta) \cos n \theta \, d \theta$. The right hand side being only quadratic in $k$ up to terms which go to zero with $N$, this can be viewed as a Gaussian fluctuation formula. The
average on the LHS of (\ref{S1c}) reduces to the average in (\ref{asy-Gau}) provided
we set $k=2p$ and $a(x) = a(x;\lambda) = \log |\lambda - x|$. For $|\lambda| > 1$ this
particular linear statistic satisfies the requirements for the validity of 
(\ref{S1c}), with an application given to the large deviation asymptotics of the spectral
density given in \cite{Fo12}. However the singularity of $\log |\lambda - x|$ inside the support in the case $|\lambda| < 1$ invalidates (\ref{S1c}); specifically one can
check that the sum $\sum_{n=1}^\infty n a_n^2$ does not converge. Nonetheless, evaluation of the first two terms on the RHS of (\ref{S1c}) reproduces exactly (with $k=2p$) the two
factors in (\ref{S1b}) which are proportional to $p$ in their exponent. Returning to the diverging final term in (\ref{S1c}), use of the expansion 
$\log(2|\cos \theta - \cos \phi|) = - \sum_{n=1}^\infty {2 \over n} \cos n \theta \cos n \phi$ (see e.g.~\cite[Exercises 1.4 q.4]{Fo10}) allows one 
 to compute  $a_n = - {2 \over n} \cos n \phi$, where $\lambda = \cos \phi$. Simple manipulation then gives
   \begin{equation}\label{6.1E}
  \sum_{n=1}^\infty n a_n^2 ={1 \over 2}  \Big ( \sum_{n=1}^\infty {}^* {1 \over n} - \sum_{n=1}^\infty {\cos 2 n \phi \over n} \Big ) ={1 \over 2} \sum_{n=1}^\infty {}^* {1 \over n} +
{1 \over 2}  \log |2 \sin \phi|
  \end{equation}   
(in relation to the second equality see e.g.~\cite[Eq.~(14.95)]{Fo10}), where the asterisk indicates that a regularisation of the otherwise
divergent series is required.    Choosing the latter to be $\log N$, (\ref{6.1E}) substituted in (\ref{S1c}) is seen to precisely match the term raised to the power
of $p^2$ in (\ref{S1b}).

In the cases of Laguerre and Jacobi ensembles of Proposition \ref{G-gam-eq}, for a
smooth linear statistic an expansion formula of the form (\ref{S1c}) is still expected, but with some changes to the details. For example, in the Laguerre case, results from
\cite[Prop.~3.9 with a rescaling to account for $x \mapsto x/4$]{FRW17} give that the terms proportional to $k$ on the RHS of (\ref{S1c}) should be replaced by
 \begin{equation}\label{6.1F}
k N \int_0^1 a(x) \rho^{\rm MP}(\lambda) \, d \lambda + k \int_{-\infty}^\infty
a(x) \mu(x) \, dx,
\end{equation}
where 
\begin{equation}\label{6.1G}
\mu(x) := {1 - {2 \over \beta} \over 4} \Big ( \delta(x) - \delta(x-1) \Big ) +
{a \over \beta} \bigg ( {1 \over \pi x} \Big ( {1 \over x} - 1 \Big )^{-1/2}
\mathbbm 1_{0 < x < 1} - \delta(x) \bigg ).
\end{equation}
On the other hand, with respect to the term proportional to $k^2$, the only change is that the factor $a(\cos \theta)$ in the definition of $a_n$ is to be replaced
by $a((1+\cos \theta)/2)$; see e.g.~\cite[text below (3.37) with $c=0, d=1$]{Fo23}.
Evaluating (\ref{6.1F}) in the
case $a(x)  = \log |\lambda - x|$ with $0 < x < 1$ and substituting in (the modified
form of) the RHS of (\ref{S1c}) gives agreement with the corresponding terms in
(\ref{asy-Lag}). However in relation to the term proportional to $k^2$, the appropriate
modification of (\ref{6.1E}) can no longer precisely reproduce the term with exponent
$p^2$ in (\ref{S1c}), with the function of $\lambda$ therein not equal to
$\rho^{\rm MP}(\lambda)$, although the dependence on $N$ is correctly reproduced. 
Starting with the appropriate replacement of (\ref{6.1F}) from 
\cite[\S 4.3]{FRW17}, one can verify that these findings in relation to the
structure of the asymptotic expansion for the averaged power of
the characteristic polynomial in the Laguerre case relative to that for a smooth
linear statistic carry over to the asymptotic formula in the Jacobi case
(\ref{asy-Jac}).

\section{Gaussian case}\label{S3}
In this section, we focus on the even moments of characteristic polynomials for Gaussian $\beta$ ensemble with proper scaling, that is 
\begin{align}\label{m-cha-poly}
	\left\langle\prod_{l=1}^N |\lambda-x_l|^{2 p}\right\rangle_{\mathrm{ME}_{\beta, N}\left[e^{-\beta N x^2}\right]},\quad p\in\mathbb{N}.
\end{align}
The  duality formula (\ref{R2}) allows 
the $N$-dimensional integral in (\ref{m-cha-poly}) to be transformed to a $2p$-dimensional integral. Denote by
\begin{align}\label{Cb}
    C_{\beta,N}\left[\omega(x)\right]=\int_{-\infty}^{\infty}\dots\int_{-\infty}^{\infty}\prod_{1\leq j<k\leq N}|x_j-x_k|^\beta \prod_{j=1}^{N}\omega(x_j)dx_j
\end{align}
the partition function for $\mathrm{ME}_{{\beta},N}[\omega(x)]$.
Then it follows that
\begin{align}\label{A=CR}
	&\left\langle\prod_{l=1}^N |\lambda-x_l|^{2 p}\right\rangle_{\mathrm{ME}_{\beta, N}\left[e^{-\beta N x^2}\right]}\nonumber\\
	&\quad ={2^{-pN}\over C_{{4\over\beta},2p}\left[e^{-Nx^2}\right]}\int_{-\infty}^{\infty}\dots\int_{-\infty}^{\infty}\prod_{j=1}^{2p}e^{Nf^G(u_j,\lambda)}du_j\prod_{1\leq j<k\leq 2p}|u_j-u_k|^{{4\over\beta}}\nonumber\\
    &\quad={2^{-pN}\over C_{{4\over\beta},2p}\left[e^{-Nx^2}\right]}R_{N,\beta}[f^G;\mathbb{R}],
\end{align}
where $f^G(u,\lambda)=-u^2+\log(\sqrt{2}i\lambda-u)$ and $C_{{4\over\beta},2p}\left[e^{-Nx^2}\right]$ is evaluated by the Selberg integral
\begin{align}\label{3.6}
	C_{{4\over\beta},2p}\left[e^{-Nx^2}\right]=2^{-{2\over\beta}p(2p-1)}\pi^{p}N^{-p-{2\over\beta}p(2p-1)}\prod_{j=1}^{2p}{\Gamma(1+{2j\over\beta})\over \Gamma(1+{2\over\beta})}.
\end{align}
Before applying steepest descent analysis to obtain the asymptotic behavior of $R_{N,\beta}[f^G;\mathbb{R}]$, as the very first step, we need to modify the integration domain by certain contours such that the product of differences in the integrand is an analytic function. This is done by the following lemma from \cite[Lemma 1]{DF06}.
\begin{lemma}\label{lem1}
    Let $\{C_j\}$ be a set of non-intersecting contours such that $C_1$ is a simple contour going from $-\infty$ to $\infty$ and such that $C_j$ goes from $-\infty$ to $u_{j-1}$ for all $j = 2, \dots, n$ (see Fig.~1). Then
\begin{align*}
    \int_{-\infty}^\infty& du_1 \cdots \int_{-\infty}^\infty du_n  \prod_{i=1}^n e^{Nf(u_i,\lambda)} \prod_{1 \leq j < k \leq n} |u_j - u_k|^{4/\beta}\\
&= n! \int_{C_1} du_1 \cdots  \int_{C_n} du_n\prod_{i=1}^n e^{Nf(u_i,\lambda)} \prod_{1 \leq j < k \leq n} (u_j - u_k)^{4/\beta} ,
\end{align*}
where $-\pi < \arg u_j \leq \pi$ and where $\arg (u_i - u_j)^{4/\beta} = 0$ when $u_i, u_j \in \mathbb{R}$ but $u_i > u_j$.
\end{lemma}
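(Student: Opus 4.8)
The plan is to prove the identity in two stages: a symmetrisation that turns the absolute values into honest (single-valued, principal-branch) powers, followed by a chain of contour deformations justified by Cauchy's theorem.

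First I would exploit the full permutation symmetry of the integrand $\prod_{i=1}^n e^{Nf(u_i,\lambda)}\prod_{1\le j<k\le n}|u_j-u_k|^{4/\beta}$ on $\R^n$. Since this integrand is invariant under any permutation of $u_1,\dots,u_n$, the integral over $\R^n$ equals $n!$ times the integral over the ordered chamber $\{u_1>u_2>\cdots>u_n\}$ (the $n!$ chambers contribute equally and their common boundary has measure zero). On this chamber $u_j-u_k>0$ for $j<k$, so with the stated convention $\arg(u_i-u_j)^{4/\beta}=0$ for real $u_i>u_j$ one has $|u_j-u_k|^{4/\beta}=(u_j-u_k)^{4/\beta}$, and the absolute values may be replaced by the principal-branch powers appearing on the right-hand side. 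Crucially, the ordered chamber is precisely the undeformed, real version of the nested contour configuration: writing it as the iterated integral $\int_{-\infty}^\infty du_1\int_{-\infty}^{u_1}du_2\cdots\int_{-\infty}^{u_{n-1}}du_n$ matches $C_1=\R$ together with each $C_j$ running from $-\infty$ to $u_{j-1}$ along the real axis.

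Second I would deform the $n$ real segments into the prescribed complex contours one variable at a time, working from the innermost integration variable $u_n$ outward (equivalently, by induction on $n$). With $u_1,\dots,u_{n-1}$ held fixed, the integrand is, as a function of $u_n$, analytic away from the branch points at $u_1,\dots,u_{n-1}$; since $C_n$ shares its endpoints ($-\infty$ and $u_{n-1}$) with the real segment and, by the non-intersecting hypothesis, the region swept between them contains none of these branch points, Cauchy's theorem gives equality of the two $u_n$-integrals. Here one uses that $e^{Nf}$ decays at the $-\infty$ end of the contour (as it does for $f=f^G$, where $\mathrm{Re}(-u^2)\to-\infty$) so that the connecting arc at infinity contributes nothing, and that $4/\beta>0$ makes the integrand absolutely integrable at the moving endpoint $u_{n-1}$ in spite of the branch point there. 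The resulting inner integral is an analytic function of $u_{n-1}$, so the same argument deforms the $u_{n-1}$ contour to $C_{n-1}$, and so on up to the deformation of $C_1=\R$ to the general simple contour $C_1$; the prefactor $n!$ is carried along unchanged.

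The main obstacle is the branch-cut bookkeeping, which is the whole reason the lemma is nontrivial. Because $4/\beta$ is generically non-integer, each factor $(u_j-u_k)^{4/\beta}$ is multivalued with a branch point at $u_j=u_k$, and the asserted equality can hold only once the branches are pinned down globally by the stated $\arg$ conventions. The substantive point to verify at every stage of the deformation is that no contour ever crosses such a branch cut --- equivalently, that the contours remain mutually non-intersecting throughout --- and that deforming the inner contours introduces no new singularities in the outer variables. This is exactly where the non-intersecting hypothesis on $\{C_j\}$ is consumed, and it requires particular care near the moving endpoints $u_{j-1}$, which are themselves branch points of the integrand.
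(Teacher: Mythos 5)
Your proof is correct and takes essentially the same route as the source: the paper states this lemma without proof, importing it from \cite{DF06}, and the argument there is precisely your two steps --- symmetrisation over the $n!$ orderings to reduce to the chamber $u_1>\cdots>u_n$, where $|u_j-u_k|^{4/\beta}=(u_j-u_k)^{4/\beta}$ under the stated branch convention, followed by Cauchy-theorem deformation of the nested real segments onto the contours $C_j$, using analyticity of the integrand, decay at $-\infty$, integrability at the branch-point endpoints, and the non-intersection hypothesis to keep the branches consistent. Your closing discussion of the branch-cut bookkeeping correctly identifies where the hypotheses are consumed, so nothing further is needed.
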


\begin{figure}
    \centering
    \includegraphics[width=0.7\linewidth]{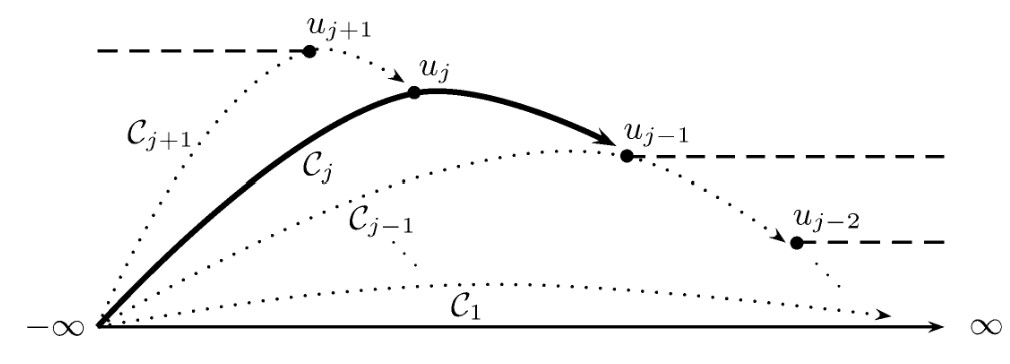}
    \caption{New contours $\{\mathcal{C}_1,\dots\mathcal{C}_n\}$ in the complex $u_j$-plane.}
    \label{fig:enter-label}
\end{figure}

The basic idea of steepest decent method is to choose a path on which $f(u,x)$ has maximum decrease. In all the classical cases we are considering, $f(u,\lambda)$ has two simple saddle points $u_{\pm}$, that is
	\begin{align}\label{f-dr-re}
		\left.\frac{\partial}{\partial u} f(u, x)\right|_{u_{ \pm}}=0,\left.\quad \frac{\partial^2}{\partial u^2} f(u, x)\right|_{u_{ \pm}}=R e^{\mathrm{i} \phi_{ \pm}}, \quad R>0 .
	\end{align}
	The directions of steepest descent at these points, denoted $\theta_{\pm}$, are such that $ \cos \left(2 \theta_{ \pm}+\phi_{ \pm}\right)=-1$ and $\sin \left(2 \theta_{ \pm}+\phi_{ \pm}\right)=0$, and therefore given by
	\begin{align}\label{theta-def}
		\theta_{ \pm}=\frac{\pi-\phi_{ \pm}}{2}(\bmod \pi), \quad-\pi<\theta_{ \pm} \leqslant \pi.
	\end{align}
\begin{prop}\label{sd-prop-G}
	Let $f (u, x)$ be a function that satisfies Eqs. \eqref{f-dr-re} and \eqref{theta-def}. Denote $f_{\pm}=f(u_{\pm},\lambda)$. Then we have
	\begin{align}\label{RNb}
		R_{N,\beta}[f;\mathbb{R}]=\binom{2p}{p}S_{p}(\Gamma_{p,\beta})^{2}\left(1+\mathrm{O}\left( N^{-\mathrm{min}\{2/\beta,1\}}\right)\right),
	\end{align}
	where 
	\begin{align}
		&S_p=\left({2\over NR}\right)^{p+{2\over\beta}p(p-1)}(u_+-u_-)^{{4\over\beta}p^2}e^{Np(f_++f_-)+i(\theta_{+}+\theta_{-})\left(p+{2\over\beta}p(p-1)\right)},\\
		&\begin{aligned}\label{Gam-nb}
			\Gamma_{n, \beta} & :=\int_{-\infty}^{\infty} d u_1 \cdots \int_{-\infty}^{\infty} d u_n \prod_{i=1}^n e^{-u_i^2} \prod_{1 \leqslant j<k \leqslant n}\left|u_j-u_k\right|^{4 / \beta} \\
			& =\frac{\pi^{n / 2}}{2^{n(n-1) / \beta}} \prod_{j=2}^n \frac{\Gamma(1+2 j / \beta)}{\Gamma(1+2 / \beta)} .
		\end{aligned}
	\end{align}
\end{prop}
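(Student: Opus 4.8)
The plan is to evaluate the $2p$-dimensional integral $R_{N,\beta}[f;\mathbb{R}]$ of \eqref{RN} by a genuine multidimensional steepest descent in which the two simple saddles $u_\pm$ of \eqref{f-dr-re} play symmetric roles. The first step is to invoke Lemma \ref{lem1}, which trades the non-analytic $\prod_{j<k}|u_j-u_k|^{4/\beta}$ for the analytic $\prod_{j<k}(u_j-u_k)^{4/\beta}$ over the ordered, non-intersecting contours $C_1,\dots,C_{2p}$, at the price of the prefactor $(2p)!$. With an analytic integrand one may freely deform each $C_j$ onto a steepest-descent path; since $f$ has exactly the two saddles $u_\pm$, the deformation routes every contour through both of them, crossing along the directions $\theta_\pm$ of \eqref{theta-def} so that $\mathrm{Re}\,f$ has a local maximum at each saddle and decreases along the path.

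A property I would check at the outset, and which is essential, is that $\mathrm{Re}\,f_+=\mathrm{Re}\,f_-$; for the Gaussian $f^{G}$ this follows from $u_++u_-=\sqrt2\,i\lambda$ and $u_+u_-=-\tfrac12$, which give $\mathrm{Re}\,f_\pm=\lambda^2-\tfrac12-\tfrac12\log2$. Consequently no single saddle dominates exponentially, and the integral concentrates on configurations in which each of the $2p$ variables lies within an $\mathrm{O}(N^{-1/2})$ neighbourhood of one of the saddles. I would therefore organise the asymptotics according to the number $k$ of variables localising near $u_+$, with the remaining $2p-k$ near $u_-$. The local rescaling $u=u_\pm+e^{i\theta_\pm}\sqrt{2/(NR)}\,w$ turns $Nf(u)$ into $Nf_\pm-w^2+\mathrm{O}(N^{-1/2})$ by \eqref{f-dr-re}--\eqref{theta-def}; the within-cluster factors $(u_j-u_{j'})^{4/\beta}$, after the scaling and phase are extracted, produce the Gaussian Selberg integrals $\Gamma_{k,\beta}$ and $\Gamma_{2p-k,\beta}$ of \eqref{Gam-nb}; and the cross-cluster pairs factor out to leading order as $(u_+-u_-)^{(4/\beta)k(2p-k)}$.

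Assembling these pieces, the contribution of a fixed $k$ carries the net combinatorial weight $\binom{2p}{k}$ — the $(2p)!$ of Lemma \ref{lem1} divided by the $k!\,(2p-k)!$ orderings inside the two clusters — the power $\big(\tfrac{2}{NR}\big)^{\,p+(2/\beta)[\binom{k}{2}+\binom{2p-k}{2}]}$, the phases $e^{i\theta_+(\cdots)}e^{i\theta_-(\cdots)}$, and the exponential $e^{N(kf_++(2p-k)f_-)}$, whose modulus is independent of $k$ by the equal-real-part property. The selection of the dominant term is then purely algebraic: the exponent of $N^{-1/2}$ equals $2p+(2/\beta)\big[k(k-1)+(2p-k)(2p-k-1)\big]$, and the bracket is strictly minimised at the balanced split $k=p$. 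Reading off that term gives the combinatorial factor $\binom{2p}{p}$, the two cluster integrals $(\Gamma_{p,\beta})^2$, and scaling, phase, cross-Vandermonde and exponential factors that assemble into exactly $S_p$, reproducing the leading term of \eqref{RNb}.

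For the error I would track two independent contributions. The adjacent partitions $k=p\pm1$ are suppressed by precisely one extra factor $N^{-(1/2)(4/\beta)}=N^{-2/\beta}$ relative to $k=p$, since the bracket above increases by $2$ there; meanwhile, within the balanced configuration the standard corrections from the cubic and higher Taylor coefficients of $f$, together with the $\mathrm{O}(N^{-1/2})$ corrections to the cross-cluster factor (whose odd part integrates to zero against the even Gaussian), enter at order $N^{-1}$. The larger of the two gives the claimed factor $1+\mathrm{O}\!\big(N^{-\min\{2/\beta,1\}}\big)$. I expect the principal obstacle to lie not in the single-saddle Laplace estimate but in the rigorous global control of the contour deformation and the termwise localisation: one must steer the ordered, non-intersecting contours through both saddles while keeping $\mathrm{Re}\,f$ decreasing away from them, bound the connecting arcs and tails uniformly in $N$, and justify that the two-cluster interaction is captured to the stated order — this interaction bookkeeping being the delicate part.
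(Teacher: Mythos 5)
Your proposal is correct and follows essentially the same route as the paper's proof (which adapts \cite[Proposition 2]{DF06}): Lemma \ref{lem1} to analyticize the integrand, deformation onto steepest descent paths through both saddles, classification by the cluster split $k$ near $u_+$ with within-cluster Gaussian Selberg integrals $\Gamma_{k,\beta}\Gamma_{2p-k,\beta}$, cross-cluster factor $(u_+-u_-)^{(4/\beta)k(2p-k)}$, dominance of the balanced split $k=p$ giving $\binom{2p}{p}S_p(\Gamma_{p,\beta})^2$, and the same two error sources ($N^{-2/\beta}$ from $k=p\pm1$, $N^{-1}$ from Taylor corrections with odd terms vanishing by parity). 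Your explicit verification that $\mathrm{Re}\,f_+=\mathrm{Re}\,f_-$, which the paper leaves implicit in the claim that $(S_{p+k}+S_{p-k})/S_p = \mathrm{O}(N^{-2k^2/\beta})$, is a worthwhile clarification but not a different method.
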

\begin{proof}
The proof is similar to that used for \cite[Proposition 2]{DF06} based on steepest descent method, we write it down here for consistency. We first apply Lemma \ref{lem1} to $R_{N,\beta}[f;\mathbf{R}]$ such that the integration contours are deformed into steepest descent contours $\mathcal{S}_j=\mathcal{S}_j^+\cup \mathcal{S}_j^-$ passing through the saddle points $u_\pm$, which form a complex conjugate pair. Close to those points, the contours are parameterized by 
\begin{align*}
    u_j=u_{ \pm}+t_j e^{\mathrm{i} \theta_{ \pm}}, \quad-\pi<\arg u_j \leqslant \pi, \quad \text { on } \quad \mathcal{S}_j^{ \pm},
\end{align*}
	where the angles of steepest descent are given by \eqref{theta-def} and where $t_j\in(-\tau,\tau) $ for some $\tau>0$. Moreover, we impose $t_i>t_j$ for $i<j$ in order to guarantee $\mathfrak{R}\left(u_i\right)>\mathfrak{R}\left(u_j\right)$ for $i<j$. Let 
\begin{align*}
    y_j=\sqrt{\frac{N R}{2}} t_j,
\end{align*}
we obtain
\begin{align*}
    y_i\in (-\infty,\infty)\text{ and } Nf(u_j,\lambda)=Nf_\pm-y^2_j+\mathrm{O}(1/\sqrt{N})
\end{align*}
as $N\to \infty$. When both $u_j$ and $u_k$ are close to the same saddle point $u_{ \pm}$,
\begin{align*}
    \left(u_j-u_k\right)^{4 / \beta}=\left(\frac{2}{N R}\right)^{2 / \beta} e^{\mathrm{i} 4 \theta_{ \pm} / \beta}\left(y_j-y_k\right)^{4 / \beta}
\end{align*}
When $u_j$ is on $\mathcal{S}_j^{+}$while $u_k$ is on $\mathcal{S}_j^{-}$, we have
\begin{align*}
    \left(u_j-u_k\right)^{4 / \beta}=\left(u_{+}-u_{-}\right)^{4 / \beta}+\mathrm{O}(1 / \sqrt{N})
\end{align*}
Now in terms of the steepest descent paths, the multiple integral $R_{N,\beta}[f;\mathbb{R}]$ can be written as 
\begin{align*}
    \begin{aligned}
R_{N, \beta}[f;\mathbb{R}]= & (2p)!\sum_{n=0}^{2p}\left(\int_{\mathcal{S}_1^{+}} d u_1 \cdots \int_{\mathcal{S}_n^{+}} d u_n\right. \\
& \left.\times \int_{\mathcal{S}_{n+1}^{-}} d u_{n+1} \cdots \int_{\mathcal{S}_{2p}^{-}} d u_{2p} \prod_{j=1}^{2p} e^{N f\left(u_j,\lambda\right)} \prod_{1 \leqslant j<k \leqslant 2p}\left(u_j-u_k\right)^{4 / \beta}\right) .
\end{aligned}
\end{align*}
In terms of the new variables $y_j$, this becomes
\begin{align*}
    \begin{aligned}
& (2p)!\sum_{n=0}^{2p} S_n\left[\int_{-\infty}^{\infty} d y_1 \int_{-\infty}^{y_1} d y_2 \cdots \int_{-\infty}^{y_{n-1}} d y_n \prod_{j=1}^n e^{-y_j^2}\left(1+\mathrm{O}\left(\frac{1}{\sqrt{N}}\right)\right)\right. \\
& \quad \times \prod_{1 \leqslant p<q \leqslant n}\left(y_p-y_q\right)^{4 / \beta} \int_{-\infty}^{\infty} d y_{n+1} \int_{-\infty}^{y_{n+1}} d y_{n+2} \cdots \int_{-\infty}^{y_{2p-1}} d y_{2p} \\
& \left.\quad \times \prod_{k=n+1}^{2p} e^{-y_k^2}\left(1+\mathrm{O}\left(\frac{1}{\sqrt{N}}\right)\right) \prod_{n+1 \leqslant r<s \leqslant 2p}\left(y_r-y_s\right)^{4 / \beta}\right],
\end{aligned}
\end{align*}
where
\begin{align*}
    S_p=&(u_+-u_-)^{\frac{4}{\beta}n(2p-n)}\left(\frac{2}{NR}\right)^{\frac{2}{\beta}(n^2-2pn+2p^2-p)+p}e^{N(nf_++(2p-n)f_-)}\nonumber\\
    &\times e^{\mathrm{i}(n\theta_++(2p-n)\theta_-)}e^{\frac{2}{\beta}\mathrm{i}(n(n-1)\theta_++(2p-n)(2p-n-1)\theta_-)}.
\end{align*}
Here the multiple integrals with Gaussian weight are in fact ordered versions of $\Gamma_{n,\beta}$ introduced in \eqref{Gam-nb}. We again use Lemma \ref{lem1} and get
\begin{align*}
    R_{N,\beta}&[f;\mathbb{R}]=\sum_{n=0}^{2p}\binom{2p}{n}\left(S_n\Gamma_{n,\beta}\Gamma_{2p-n,\beta}+\mathrm{O}\left(\frac{1}{N}\right)\right)\nonumber\\
    =&\binom{2p}{p}S_p(\Gamma_{p,\beta})^2\left(1+\sum_{k=1}^{p}\frac{(p!)^2\Gamma_{p+k,\beta}\Gamma_{p-k,\beta}}{(p+k)!(p-k)!(\Gamma_{p,\beta})^2}\frac{S_{p+k}+S_{p-k}}{S_p}+\mathrm{O}\left(\frac{1}{N}\right) \right).
\end{align*}
Here the error term becomes $\mathrm{O}(1/N)$ due to the fact that the $1/\sqrt{N}$ terms are given by symmetric polynomials in $\{y_j\}$ of odd order, which equals zero after integration.
Note also that $(S_{p+k}+S_{p-k})/{S_p}$ is of order $1/N^{2k^2/\beta}$, which leads to the error term in \eqref{RNb}.
\end{proof}

In the case of Gaussian $\beta$ ensemble, and require that $|\lambda| < 1$ and thus inside the support of the limiting global density, the 
Wigner semi-circle law 
\begin{align}\label{semi-cir-law}
    \rho^{\rm W}(\lambda)={2\over\pi}\sqrt{1-\lambda^2}.
\end{align}
Solving for the saddle points of $f^G(u,\lambda)=-u^2+\log(\sqrt{2}i\lambda-u)$ yields
\begin{align*}
	u_{\pm}={i\lambda\pm \sqrt{1-\lambda^2}\over\sqrt{2}},\quad \left.\frac{\partial^2}{\partial u^2} f^G(u, \lambda)\right|_{u_{ \pm}}=4\sqrt{1-\lambda^2}\left(-\sqrt{1-\lambda^2}\mp i\lambda\right),
\end{align*}
which then gives
\begin{align*}
	R=4\sqrt{1-\lambda^2},\quad u_{+}-u_{-}=\sqrt{2(1-\lambda^2)},\quad f^G_{+}+f^G_{-}=2\lambda^2-1-\log2,\quad \theta_{ \pm}=\mp \arcsin \lambda.
\end{align*}
Therefore, by \eqref{RNb}, the large $N$ asymptotics for $R_{N,\beta}[f^G;\mathbb{R}]$ is given by
\begin{align*}
	R_{N,\beta}[f^G;\mathbb{R}]\sim\pi^{p}2^{-{2p^2\over \beta}-{2p\over \beta}(p-1)}&\left(N\pi\rho^{\rm W}(\lambda)\right)^{-p-{2p\over\beta}(p-1)}\left(\pi\rho^{\rm W}(\lambda)\right)^{4p^2\over \beta}\\
    &\times 
    e^{pN(2\lambda^2-1-\log 2)}\binom{2p}{p}\prod_{j=1}^{p}\left({\Gamma(1+2j/\beta)\over \Gamma(1+2/\beta)}\right)^{2}.
\end{align*}
By substituting this into equation \eqref{A=CR}, we obtain the sought asymptotic formula 
(\ref{asy-Gau}).

\section{Laguerre case}\label{Lag-sec}
For the Laguerre weight $x^{a}e^{-2\beta Nx}$, there are two cases of interest, and we treat them saparately.
\subsection{The case with parameter $a$ fixed.}
In this subsection, we consider the average characteristic polynomials for Laguerre $\beta$-ensemble of fixed parameter $a$.
Following a similar strategy as in the Gaussian case, the duality formula \cite[eq. (3.14)]{Fo25+}
\begin{align}\label{dualf-L}
 &\left\langle\operatorname{det}\left(\lambda \mathbb{I}_N-H\right)^{2p}\right\rangle_{\mathrm{ME}_{\beta, N}\left[x^a e^{-\beta x / 2}\right]}\nonumber\\
 &\qquad={C_{\beta, N}\left[x^{a+2 p} e^{-\frac{\beta}{2} x}\right]\over C_{\beta, N}\left[x^{a} e^{-\frac{\beta}{2} x}\right]}\left\langle\prod_{l=1}^{2p} e^{-\lambda e^{2 \pi i \theta_l}}\right\rangle_{\mathrm{CE}_{4 / \beta, 2p}\left[e^{2 \pi i(a+1) \theta / \beta-\pi i \theta(N+1)}\left|1+e^{2 \pi i \theta}\right|^{2(a+1) / \beta+N-1}\right]},
\end{align}
is applied to get
\begin{align}
    &\left\langle\prod_{l=1}^N |\lambda-x_l|^{2 p}\right\rangle_{\mathrm{ME}_{\beta, N}\left[x^{a}e^{-2\beta N x}\right]}\nonumber\\
    &\qquad =\frac{(4 N)^{-[a+1+2 p] N-\frac{\beta}{2} N(N-1)}  C_{\beta, N}\left[x^{a+2 p} e^{-\frac{\beta}{2} x}\right]}{C_{\beta, N}\left[x^a e^{-2 N \beta x}\right]}\left\langle\prod_{l=1}^{2p} e^{-4N\lambda e^{2 \pi i \theta_l}}\right\rangle_{\mathrm{CE}_{4 / \beta, 2p}\left[\omega(\theta)\right]}\nonumber\\
    &\qquad=\frac{(4 N)^{-[a+1+2 p] N-\frac{\beta}{2} N(N-1)}  C_{\beta, N}\left[x^{a+2 p} e^{-\frac{\beta}{2} x}\right]}{C_{\beta, N}\left[x^a e^{-2 N \beta x}\right]\tilde{C}_{{4\over\beta},2p}[\omega(\theta)](2\pi\mathrm{i})^{2p}}\cdot R_{N,\beta}[f^{L};\mathcal{C}], \label{eq-RN-L}
\end{align}
where $\omega(\theta)=e^{2 \pi i(a+1) \theta / \beta-\pi i \theta(N+1)}\left|1+e^{2 \pi i \theta}\right|^{2(a+1) / \beta+N-1}$ and $R_{N,\beta}[f^L;\mathcal{C}]$ is defined by \eqref{RN}, with $\mathcal{C}$ being the contour starting and ending at $z=-1$ going counterclockwise along the unit circle and 
\begin{align*}
    f^L(u, \lambda)=-\ln u+\ln (1+u)-4 \lambda u+\frac{1}{N}\left(-\ln u+\left(\frac{2}{\beta}(a+1)-1\right) \ln (1+u)\right).
\end{align*}
A slightly different lemma is used to modify the unit circle to proper contours such that the integrand in $R_{N,\beta}[f^L;\mathcal{C}]$ is analytic (See \cite[Lemma 4]{DF06}).
\begin{lemma}\label{lem2}
    Let $\left\{\mathcal{C}_j\right\}$ be a set of non-intersecting counterclockwise contours around the origin, all starting at $u_j=1$, such that $0 \leqslant \arg \left(u_n\right) \leqslant \cdots \leqslant \arg \left(u_1\right) \leqslant 2 \pi$. Then

$$
\begin{aligned}
& \oint_{\mathcal{C}} d u_1 \cdots \oint_{\mathcal{C}} d u_n \prod_{i=1}^n e^{N f\left(u_i, x\right)} \prod_{1 \leqslant j<k \leqslant n}\left|u_j-u_k\right|^{4 / \beta} \\
& \quad=n!(-1)^{n(n-1) / \beta} \int_{\mathcal{C}_1} d u_1 \cdots \int_{\mathcal{C}_n} d u_n \prod_{i=1}^n e^{N \tilde{f}\left(u_i, x\right)} \prod_{1 \leqslant j<k \leqslant n}\left(u_j-u_k\right)^{4 / \beta},
\end{aligned}
$$
where $\tilde{f}(u, x)=f(u, x)-\frac{2(n-1)}{\beta N} \ln u$.
\end{lemma}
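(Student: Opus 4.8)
The plan is to mirror the proof of Lemma~\ref{lem1} (and of \cite[Lemma~4]{DF06}): first exploit the permutation symmetry of the integrand to collapse the $n$ independent circle integrals onto a single ordered sector, which produces the factor $n!$; then convert the single-valued Vandermonde $\prod_{j<k}|u_j-u_k|^{4/\beta}$ into the analytic power $\prod_{j<k}(u_j-u_k)^{4/\beta}$, keeping track of the phase this introduces; and finally invoke Cauchy's theorem to deform the circular arcs into the contours $\mathcal{C}_j$. First I would parameterise the circle by $u_j=e^{i\theta_j}$ (anchored at the common base point) and note that, since $\prod_i e^{Nf(u_i,x)}$ is a product of single-variable factors and $|u_j-u_k|$ is symmetric, the whole integrand on the left is invariant under permutations of $(u_1,\dots,u_n)$. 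Restricting to the ordered sector $\arg u_1\geq\cdots\geq\arg u_n$ and multiplying by $n!$ reproduces the full integral, which is exactly the source of the combinatorial factor on the right.

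The key local computation is the circle identity $u_j-u_k=2i\,e^{i(\theta_j+\theta_k)/2}\sin\!\big(\tfrac{\theta_j-\theta_k}{2}\big)$. In the ordered sector one has $0\leq\theta_j-\theta_k\leq 2\pi$ for $j<k$, hence $\sin\!\big(\tfrac{\theta_j-\theta_k}{2}\big)\geq 0$ and $|u_j-u_k|=2\sin\!\big(\tfrac{\theta_j-\theta_k}{2}\big)$. Reading off the phase gives, for the chosen branch, $(u_j-u_k)^{4/\beta}=i^{4/\beta}\,e^{i(2/\beta)(\theta_j+\theta_k)}\,|u_j-u_k|^{4/\beta}$ per pair. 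Inverting this expresses the absolute-value Vandermonde through the analytic one: over the $\binom{n}{2}$ pairs the accumulated factors $i^{4/\beta}$ combine, via $i^{2}=-1$ and $4\binom{n}{2}/\beta=2n(n-1)/\beta$, into the overall prefactor $(-1)^{n(n-1)/\beta}$ (its precise sign being dictated by the branch convention fixed for $(u_j-u_k)^{4/\beta}$), while the angular factors reassemble, using $\sum_{j<k}(\theta_j+\theta_k)=(n-1)\sum_j\theta_j$, into $\prod_j e^{i(2(n-1)/\beta)\theta_j}=\prod_j u_j^{2(n-1)/\beta}$. Absorbing this last product into the exponential through $u_j^{2(n-1)/\beta}=e^{\frac{2(n-1)}{\beta}\ln u_j}$ is precisely the replacement $f\mapsto\tilde f=f-\frac{2(n-1)}{\beta N}\ln u$ in the statement.

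After this rewriting the integrand $\prod_j e^{N\tilde f(u_j,x)}\prod_{j<k}(u_j-u_k)^{4/\beta}$ is an analytic function of each $u_j$ away from the coincidence loci $u_j=u_k$ and the branch points of $\tilde f$ (the origin, from the $\ln u$ term, together with whatever singularities $f$ itself carries). By Cauchy's theorem each arc may then be deformed, one variable at a time, into the contour $\mathcal{C}_j$, and since the $\mathcal{C}_j$ are closed loops about the origin no endpoint contributions arise. I expect the main obstacle to be exactly this last step: one must verify that the family $\{\mathcal{C}_j\}$ can be chosen simultaneously non-intersecting, each encircling the origin once, and ordered in argument so that the branch of $(u_j-u_k)^{4/\beta}$ fixed in the ordered sector is preserved throughout the homotopy, all while steering clear of the origin and of the remaining branch points of $\tilde f$. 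The phase bookkeeping is routine once the representation $u_j-u_k=2i\,e^{i(\theta_j+\theta_k)/2}\sin(\tfrac{\theta_j-\theta_k}{2})$ is in hand; it is the analytic-continuation and contour-deformation argument, together with the consistent placement of the branch cut at the base point, where the care is needed.
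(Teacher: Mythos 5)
Your proposal is correct and follows essentially the argument behind the paper's own treatment: the paper does not prove Lemma~\ref{lem2} itself but cites \cite[Lemma~4]{DF06}, whose proof is exactly your route --- symmetrization to an ordered sector giving the $n!$, the circle factorization $u_j-u_k=2i\,e^{i(\theta_j+\theta_k)/2}\sin\bigl(\tfrac{\theta_j-\theta_k}{2}\bigr)$ with $\sum_{j<k}(\theta_j+\theta_k)=(n-1)\sum_j\theta_j$ to trade $|u_j-u_k|^{4/\beta}$ for $(u_j-u_k)^{4/\beta}$ at the cost of the phase $(-1)^{n(n-1)/\beta}$ and the factor $\prod_j u_j^{-2(n-1)/\beta}$ absorbed as $\tilde f=f-\tfrac{2(n-1)}{\beta N}\ln u$, followed by Cauchy deformation onto the nested contours $\mathcal{C}_j$ anchored at the base point $u=1$. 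Your explicit caveats (the branch convention fixing the meaning of $(-1)^{n(n-1)/\beta}$, and keeping the base point fixed since the $\ln u$ term makes the integrand multivalued around the origin) are precisely the points of care in that reference.
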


\begin{prop}\label{sd-prop-LJ}
    Let $\tilde{f}(u, x)=f(u, x)-{2(2p-1)\over \beta N} \ln u$, where $f (u, x)$ is the function appearing in $R_{N,\beta}[f;\mathcal{C}]$ and satisfying Eqs. \eqref{f-dr-re} and \eqref{theta-def}. Let also $\tilde{f}_{\pm}=\tilde{f}(u_{\pm},x)$. Suppose moreover that the saddle points are such that $0\leq \mathrm{arg}(u_+)<\mathrm{arg}(u_-)\leq 2\pi$. Then we have
	\begin{align}\label{RNb-LJ}
		R_{N,\beta}=(-1)^{2p(2p-1)/\beta}\binom{2p}{p}S_{p}(\Gamma_{p,\beta})^{2}\left(1+\mathrm{O}\left( N^{-\mathrm{min}\{2/\beta,1\}}\right)\right),
	\end{align}
	where 
	\begin{align*}
		&S_p=\left({2\over NR}\right)^{p+{2\over\beta}p(p-1)}(u_+-u_-)^{{4\over\beta}p^2}e^{Np(\tilde{f}_++\tilde{f}_-)+i(\theta_{+}+\theta_{-})\left(p+{2\over\beta}p(p-1)\right)},\\
		&\begin{aligned}
			\Gamma_{p, \beta} =\frac{\pi^{p / 2}}{2^{p(p-1) / \beta}} \prod_{j=2}^p \frac{\Gamma(1+2 j / \beta)}{\Gamma(1+2 / \beta)} .
		\end{aligned}
	\end{align*}
\end{prop}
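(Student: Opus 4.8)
The plan is to follow the steepest descent computation carried out for Proposition~\ref{sd-prop-G} almost verbatim, the sole structural novelty being that the integrand of $R_{N,\beta}[f;\mathcal{C}]$ is rendered analytic on a deformation of the unit circle by Lemma~\ref{lem2} rather than Lemma~\ref{lem1}. Accordingly, I would first apply Lemma~\ref{lem2} with $n=2p$. This converts the factor $\prod_{j<k}|u_j-u_k|^{4/\beta}$ into $\prod_{j<k}(u_j-u_k)^{4/\beta}$, replaces $f$ by $\tilde f(u,x)=f(u,x)-\tfrac{2(2p-1)}{\beta N}\ln u$, and produces the prefactor $(2p)!\,(-1)^{2p(2p-1)/\beta}$, which already accounts for the sign appearing in \eqref{RNb-LJ}.

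Next I would deform the $2p$ admissible contours onto steepest descent paths through the conjugate saddle points $u_\pm$, taken in the order $0\le\arg u_+<\arg u_-\le 2\pi$ assumed in the statement. Because $\tilde f$ differs from $f$ only by the $\mathrm{O}(1/N)$ term $-\tfrac{2(2p-1)}{\beta N}\ln u$, the saddle locations, the curvature $R$ and the descent angles $\theta_\pm$ are precisely those furnished by the conditions \eqref{f-dr-re}, \eqref{theta-def} for $f$; only the saddle values are shifted to $\tilde f_\pm=\tilde f(u_\pm,x)$, and these enter solely through $e^{N\tilde f_\pm}$ in $S_p$. Parameterising $u_j=u_\pm+t_je^{i\theta_\pm}$ near each saddle and rescaling $y_j=\sqrt{NR/2}\,t_j$ then gives the local Gaussian form $N\tilde f(u_j,\lambda)=N\tilde f_\pm-y_j^2+\mathrm{O}(1/\sqrt N)$, exactly as in the Gaussian case.

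I would then split the $2p$ integrations according to the number $n$ of variables assigned to $u_+$. In each of the two blocks the rescaled integrals are ordered Gaussian integrals over the real $y$-axis, so reapplying the real-line ordering identity (Lemma~\ref{lem1}, which carries no sign) recombines them into $\Gamma_{n,\beta}\Gamma_{2p-n,\beta}$. Collecting the $u_\pm$-dependent prefactors into $S_n$ yields
\begin{align*}
R_{N,\beta}=(-1)^{2p(2p-1)/\beta}\sum_{n=0}^{2p}\binom{2p}{n}\Big(S_n\Gamma_{n,\beta}\Gamma_{2p-n,\beta}+\mathrm{O}(1/N)\Big).
\end{align*}
The $\mathrm{O}(1/\sqrt N)$ corrections integrate to zero because the associated symmetric polynomials in $\{y_j\}$ are of odd degree, which upgrades the per-term error to $\mathrm{O}(1/N)$. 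Finally, exactly as for \eqref{RNb}, the ratio $(S_{p+k}+S_{p-k})/S_p$ is of order $N^{-2k^2/\beta}$, so the $n=p$ term dominates and the remaining terms assemble into the factor $(1+\mathrm{O}(N^{-\min\{2/\beta,1\}}))$, giving \eqref{RNb-LJ}.

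The main point requiring care is the contour deformation itself: one must verify that the steepest descent paths through $u_\pm$ can be realised as a family of non-intersecting counterclockwise contours of the type admitted by Lemma~\ref{lem2} (encircling the origin and anchored appropriately), so that analyticity of the integrand is preserved throughout. The only other bookkeeping subtlety is confirming that the single sign $(-1)^{2p(2p-1)/\beta}$ from the initial application of Lemma~\ref{lem2} is the net sign, no further signs entering when the rescaled blocks are un-ordered, since that step invokes the real-line Lemma~\ref{lem1}. With these two points settled, the remainder of the argument is identical to that of Proposition~\ref{sd-prop-G}.
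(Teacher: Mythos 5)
Your proposal is correct and takes essentially the same approach as the paper, whose entire proof of Proposition~\ref{sd-prop-LJ} is to first apply Lemma~\ref{lem2} (producing the sign $(-1)^{2p(2p-1)/\beta}$ and the shift $f\mapsto\tilde f$) and then repeat the steepest descent steps of Proposition~\ref{sd-prop-G} verbatim. The two points of care you flag --- realisability of the steepest descent deformation within the class of contours admitted by Lemma~\ref{lem2}, and confirming that no further signs arise when Lemma~\ref{lem1} is reapplied to the rescaled Gaussian blocks --- are exactly the details the paper leaves implicit, and your treatment of them is sound.
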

\begin{proof}
    We first use Lemma \ref{lem2}. The rest of the steps are similar to those of Proposition \ref{sd-prop-G}.
\end{proof}

Neglecting the $1/N$ terms in $f^L(u,\lambda)$, we solve for the saddle points of $f^{L*}(u,\lambda)=-\ln u+\ln (1+u)-4 \lambda u$, resulting in two conjugate roots
\begin{align*}
    u_{ \pm}=\frac{1}{2}\left(-1 \pm \mathrm{i} \sqrt{\frac{1}{\lambda}-1}\right) ,\quad\lambda \in(0,1).
\end{align*}
Taking second derivative gives
\begin{align*}
\begin{aligned}
& \left.\frac{\partial^2}{\partial u^2} f^L\right|_{u_{ \pm}}=\frac{1}{u^2}-\left.\frac{1}{(u+1)^2}\right|_{u_{ \pm}} =\left.\frac{2 u+1}{(u(u+1))^2}\right|_{u_{ \pm}}\\
& \qquad = \pm 16 \lambda^2 \mathrm{i}\sqrt{\frac{1}{\lambda}-1}.
\end{aligned}
\end{align*}
Therefore, we have
\begin{align*}
    R=16 \lambda^2 \sqrt{\frac{1}{\lambda}-1}, \quad \theta_{+}=-\frac{3 \pi}{4},\quad \theta_-=-\frac{\pi}{4}.
\end{align*}
Evaluation of $\tilde{f}(u,\lambda)$ on the saddle points shows
\begin{align*}
    \tilde{f}^L_{+}+\tilde{f}^L_{-}=4 \lambda+\frac{1}{N}\left(\frac{2}{\beta}(2 p-1-(a+1))+2\right) \ln (4 \lambda).
\end{align*}
Therefore, by making use of Proposition \ref{sd-prop-LJ}, we obtain that 
\begin{small}
    \begin{align*}
    &R_{N, \beta}[f^L;\mathcal{C}]\nonumber\\
    &\quad\sim \frac{{(-\pi)}^p}{2^{\frac{2 p}{\beta}(p-1)}}\left(\frac{\pi}{4} N \rho^{\mathrm{MP}}(\lambda)\right)^{-p- \frac{2p}{\beta} (p-1)} \left(\frac{\pi}{2} \rho^{\mathrm{MP}}(\lambda)\right)^{\frac{4}{\beta} p^2}(4 \lambda)^{-\frac{2 pa}{\beta}}e^{4 p N \lambda}  \binom{2p}{p} \left(\prod_{j=1}^p \frac{\Gamma(1+2 j / \beta)}{\Gamma(1+2 / \beta)}\right)^2,
\end{align*}
\end{small}
as $N\to \infty$, where 
\begin{align}\label{Mar-Pas-law}
    \rho^{\mathrm{MP}}(\lambda)={2\over \pi}\sqrt{{1\over x}-1}, \quad 0<x<1
\end{align}
is the Mar\v{c}enko-Pastur law, which is the limiting global density. According to \eqref{eq-RN-L}, we need to calculate the large $N$ asymptotics for 
\begin{align*}
\frac{(4 N)^{-(a+1+2 p) N-\frac{\beta}{2} N(N-1)}  C_{\beta, N}\left[x^{a+2 p} e^{-\frac{\beta}{2} x}\right]}{C_{\beta, N}\left[x^a e^{-2 N \beta x}\right]\tilde{C}_{{4\over\beta},2p}[\omega(\theta)](2\pi\mathrm{i})^{2p}}.
\end{align*}
This is done by the particular Selberg integral evaluation formulas 
(see e.g.~\cite[\S 4.7]{Fo10})
\begin{align}
\int_0^{\infty} d x_1 \cdots &\int_0^{\infty} d x_N \prod_{l=1}^N x_l^{\beta a / 2} e^{-\beta x_l / 2} \prod_{1 \leq j<k \leq N}\left|x_k-x_j\right|^\beta \notag\\
&=(\beta / 2)^{-N(a \beta / 2+1+(N-1) \beta / 2)} \prod_{j=0}^{N-1} \frac{\Gamma(1+(j+1) \beta / 2) \Gamma(a \beta / 2+1+j \beta / 2)}{\Gamma(1+\beta / 2)} , \label{sel-int1}\\
M_N(a, b, \lambda) & :=\int_{-1 / 2}^{1 / 2} d \theta_1 \cdots \int_{-1 / 2}^{1 / 2} d \theta_N \prod_{l=1}^N e^{\pi i \theta_l(a-b)}\left|1+e^{2 \pi i \theta_l}\right|^{a+b} \prod_{1 \leq j<k \leq N}\left|e^{2 \pi i \theta_k}-e^{2 \pi i \theta_j}\right|^{2 \lambda} \notag\\
& =\prod_{j=0}^{N-1} \frac{\Gamma(\lambda j+a+b+1) \Gamma(\lambda(j+1)+1)}{\Gamma(\lambda j+a+1) \Gamma(\lambda j+b+1) \Gamma(1+\lambda)}. \label{sel-int-2}
\end{align}
We have 
\begin{align*}
    &\frac{(4 N)^{-(a+1+2 p) N-\frac{\beta}{2} N(N-1)}  C_{\beta, N}\left[x^{a+2 p} e^{-\frac{\beta}{2} x}\right]}{C_{\beta, N}\left[x^a e^{-2 N \beta x}\right]}=\left(2\beta N\right)^{-2 p N}  \prod_{j=0}^{N-1} \frac{\Gamma\left(a+2 p+1+\frac{j \beta}{2}\right)}{\Gamma\left(a+1+\frac{j \beta}{2}\right)}\nonumber\\
    &\qquad\sim (2 \pi)^p e^{-2 p N} (4 N)^{-2 p N}  N^{2 p(N-1)+\frac{4 p}{\beta}(a+p)+\frac{2 p}{\beta}\left(\frac{\beta}{2}+1\right)}  \prod_{j=1}^{2p}\frac{1}{\Gamma\left(\frac{2}{\beta}(a+j)\right)}, 
\end{align*}
and 
\begin{align*}
    \tilde{C}_{{4\over\beta}, 2 p}&\left[e^{(\frac{2}{\beta}(a+1)-N-1) \pi i \theta} \cdot\left|1+e^{2 \pi i \theta}\right|^{\frac{2}{\beta}(a+1)+N-1}\right]=M_{2 p}\left(\frac{2}{\beta}(a+1)-1, N, \frac{2}{\beta}\right)\nonumber\\
    &=\prod_{j=0}^{2 p-1} \frac{\Gamma\left(N+\frac{2}{\beta}(a+1)+\frac{2}{\beta} j\right) \Gamma\left(1+\frac{2}{\beta}(j+1)\right)}{\Gamma\left(N+1+\frac{2}{\beta} j\right) \Gamma\left(\frac{2}{\beta}(a+1+j)\right) \Gamma\left(1+\frac{2}{\beta}\right)}\nonumber\\
    &\sim N^{2 p\left(\frac{2}{\beta}(a+1)-1\right)}\prod_{j=0}^{2 p-1} \frac{\Gamma\left(1+\frac{2}{\beta}(j+1)\right)}{\Gamma\left(\frac{2}{\beta}(a+1+j)\right) \Gamma\left(1+\frac{2}{\beta}\right)} ,
\end{align*}
as $N\to\infty$. Combining above results together, we obtain the sought asymptotic formula
(\ref{asy-Lag}).

\subsection{The case when parameter $a$ is  $\mathrm{O}(N)$.}\label{S4.2}
Set $a={\beta N\over2}\alpha$ for some $\alpha$ independent of $N$, then by the duality formula \eqref{dualf-L}, we have
\begin{align}
    &\left\langle\prod_{l=1}^N |\lambda-x_l|^{2 p}\right\rangle_{\mathrm{ME}_{\beta, N}\left[x^{{\beta N\over2}\alpha}e^{-{\beta N\
    \over 2} x}\right]}\nonumber \\
    &\qquad =\frac{N^{-2pN}  C_{\beta, N}\left[x^{{\beta N\over2}\alpha+2 p} e^{-\frac{\beta}{2} x}\right]}{C_{\beta, N}\left[x^{{\beta N\over2}\alpha} e^{-\frac{\beta}{2} x}\right]}\left\langle\prod_{l=1}^{2p} e^{-N\lambda e^{2 \pi i \theta_l}}\right\rangle_{\mathrm{CE}_{4 / \beta, 2p}\left[\omega(\theta)\right]}\nonumber \\
    &\qquad=\frac{N^{-2pN}  C_{\beta, N}\left[x^{{\beta N\over2}\alpha+2 p} e^{-\frac{\beta}{2} x}\right]}{C_{\beta, N}\left[x^{{\beta N\over2}\alpha} e^{-\frac{\beta}{2} x}\right]\tilde{C}_{{4\over\beta},2p}[\omega(\theta)](2\pi\mathrm{i})^{2p}}\cdot R_{N,\beta}[f^{L}_2;\mathcal{C}], \label{eq-RN-L2}
\end{align}
where $\omega(\theta)=e^{((\alpha -1)N+2 /\beta-1)\pi i \theta}\left|1+e^{2 \pi i \theta}\right|^{(\alpha+1)N+2 / \beta-1}$ and
\begin{align*}
    f^L_2(u,\lambda)=-\log u-\lambda u+(\alpha+1)\log(u+1)-\frac{1}{N}\left(\log u+\left(1-\frac{2}{\beta}\right)\log(1+u)\right).
\end{align*}
Solving the equation 
\begin{align*}
    {\partial \over \partial u}f^{L*}_2 =-\frac{1}{u}-\lambda+\frac{\alpha+1}{u+1}=0,
\end{align*}
we find that up to additive terms of order $1/N$, $f_2^L(u,\lambda)$ has two saddle points
\begin{align*}
    u_{ \pm}=\frac{\alpha-\lambda \pm \mathrm{i}\sqrt{(\lambda-\alpha)^2-4 \lambda}}{2 \lambda}=\frac{1}{2 \lambda}\left(\alpha-\lambda \pm \mathrm{i} \sqrt{\left(d^2-\lambda\right)\left(\lambda-c^2\right)}\right), 
\end{align*}
where $d=\sqrt{\alpha+1}+1, c=\sqrt{\alpha+1}-1$. Note that 
for $\lambda \in (c^2,d^2)$, the imaginary part of $u_\pm$ is proportional to
\begin{align}\label{L-dens2}
        \rho^{\rm L*}(\lambda)={1\over 2\pi\lambda}\sqrt{4\lambda-(\lambda-\alpha)^2},
\end{align}
which is the leading eigenvalue density of Laguerre $\beta$-ensemble with $a=\frac{\beta N}{2}\alpha$. Taking second derivative gives
\begin{align*}
    \left.\frac{\partial^2}{\partial u^2}f^{L*}_2\right|_{u_{ \pm}} & =\frac{1}{2(\alpha+1)}\left(\alpha\left(\lambda^2-(2 \alpha+4) \lambda+\alpha^2\right)\pm\left((\alpha+2) \lambda-\alpha^2\right) \sqrt{4 \lambda-(\alpha-\lambda)^2} \mathrm{i}\right).
\end{align*}
According to the notation used in \eqref{f-dr-re} and \eqref{theta-def}, we have
\begin{align*}
    R & =\frac{\lambda}{\sqrt{\alpha+1}} \sqrt{4 \lambda-(\alpha-\lambda)^2}, \quad \phi_{ \pm}= \pm\left(\pi-\arcsin \left(\sqrt{\alpha+1} \cdot\left(\alpha+2-\frac{\alpha^2}{\lambda}\right)\right)\right), \\ \theta_{+} & =\pi+\frac{1}{2} \arcsin \left(\sqrt{\alpha+1}\left(\alpha+2-\frac{\alpha^2}{\lambda}\right)\right), \theta_{-}=-\frac{1}{2} \arcsin \left(\sqrt{\alpha+1}\left(\alpha+2-\frac{\alpha^2}{\lambda}\right)\right).
\end{align*}
Further calculation shows that
\begin{align*}
    \tilde{f}^L_{2,+}+\tilde{f}^L_{2,-} =\lambda-\alpha+\left(-\alpha+\frac{1}{N}\left(2\left(1-\frac{2}{\beta}\right)\right)+\frac{4 p}{\beta}\right) \ln \lambda+\left(\alpha+1-\frac{1}{N}\left(1-\frac{2}{\beta}\right)\right) \ln (\alpha+1).
\end{align*}
Therefore, by Proposition \ref{sd-prop-LJ}, we have
\begin{align*}
    &R_{N,\beta}[f^L_2;\mathcal{C}]\sim (-2\pi )^p (\alpha+1)^{(\alpha+1)pN-{2\over p}+\frac{p}{\beta}+\frac{p^2}{\beta}}\lambda^{-\alpha pN}e^{pN(\lambda-\alpha)} \\
    &\quad \times\left(2\pi N\rho_{(1)}^L(\lambda)\right)^{-p-\frac{2p}{\beta}(p-1)}\left(2\pi \rho_{(1)}^L(\lambda)\right)^{\frac{4p^2}{\beta}}\binom{2p}{p}\left(\prod_{j=1}^{p}\frac{\Gamma(1+2j/\beta)}{\Gamma(1+2/\beta) } \right)^2.
\end{align*}
For the remaining part in \eqref{eq-RN-L2}, by making use of \eqref{sel-int1}-\eqref{sel-int-2}, we see that as $N\to \infty$, they have the following  asymptotics
\begin{align*}
    &\frac{N^{-2pN}  C_{\beta, N}\left[x^{{\beta N\over2}\alpha+2 p} e^{-\frac{\beta}{2} x}\right]}{C_{\beta, N}\left[x^{{\beta N\over2}\alpha} e^{-\frac{\beta}{2} x}\right]}\sim e^{-2pN}(\alpha+1)^{2pN}\left(\frac{\alpha+1}{\alpha}\right)^{2\alpha pN-p+\frac{4p^2}{\beta}+\frac{2p}{\beta}},\\
    &\tilde{C}_{{4\over\beta},2p}[\omega(\theta)]\sim (2\pi)^{-p}N^{-p-\frac{2p}{\beta}(2p-1)}(\alpha+1)^{2pN}\left(\frac{\alpha+1}{\alpha}\right)^{2\alpha pN-p+\frac{4p^2}{\beta}+\frac{2p}{\beta}}\prod_{j=0}^{2p-1}\frac{\Gamma\left(1+{2\over\beta}(j+1)\right)}{\Gamma\left(1+\frac{2}{\beta}\right)}.
\end{align*}

Therefore, by combining the results above, we have the companion to (\ref{asy-Lag})
when the Laguerre parameter is proportional to $N$.
	\begin{prop}\label{pro-LbE2}
	Let $p$ be a positive integer. For large $N$ we have
	\begin{small}
	    \begin{align}\label{A-LbetaE2}
		\begin{aligned}
			\left\langle\prod_{l=1}^N |\lambda-x_l|^{2 p}\right\rangle_{\mathrm{ME}_{\beta, N}\left[x^{\frac{\beta N}{2}\alpha}e^{-{\beta N\over 2} x}\right]}=&A_{\beta, p}\left({2\pi\sqrt{\alpha+1}} \rho^{\rm L*}(\lambda)\right)^{p(2-\beta) / \beta}\left({2\pi N\sqrt{\alpha+1}} \rho^{\rm L*}(\lambda)\right)^{2 p^2 / \beta} \\
			&\quad\times e^{pN \left(\lambda-\alpha\log\lambda+(\alpha+1)\log(\alpha+1)-\alpha -2\right)}\left(1+\mathrm{O}\left( N^{-\mathrm{min}\{2/\beta,1\}}\right)\right),
		\end{aligned}
	\end{align}
	\end{small}
	where $A_{\beta, p}$ is given by \eqref{A-the-const} and $\rho^{\rm L*}(\lambda)$ is the leading eigenvalue density given by \eqref{L-dens2}.
\end{prop}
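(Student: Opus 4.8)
The plan is to follow the same three-stage template already used for the Gaussian case in \S\ref{S3} and the fixed-$a$ Laguerre case in \S4.1, now with the duality formula (\ref{dualf-L}) specialised to $a=\tfrac{\beta N}{2}\alpha$. First I would insert this value of $a$ into (\ref{dualf-L}), perform the appropriate rescaling of eigenvalues, and read off the resulting identity (\ref{eq-RN-L2}), which writes the sought average as a product of three pieces: a ratio of Laguerre partition functions $C_{\beta,N}[\cdots]$, the circular-ensemble normalisation $\tilde{C}_{4/\beta,2p}[\omega(\theta)]$, and the $2p$-dimensional contour integral $R_{N,\beta}[f^L_2;\mathcal{C}]$ of (\ref{RN}), with phase function $f^L_2(u,\lambda)$ as recorded there. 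The $1/N$-correction absorbed into $f^L_2$ is exactly the piece carried along by the steepest-descent lemma, so no structural change relative to \S4.1 is required.

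Second, I would carry out the saddle-point analysis of $f^L_2$. Solving $\partial_u f^{L*}_2=0$ produces the conjugate pair $u_\pm$ displayed above, whose imaginary part is proportional to the Mar\v{c}enko--Pastur--type density $\rho^{\rm L*}(\lambda)$ of (\ref{L-dens2}) on the support $\lambda\in(c^2,d^2)$ with $c=\sqrt{\alpha+1}-1,\ d=\sqrt{\alpha+1}+1$; this is the geometric input guaranteeing that $\mathcal{C}$ can be deformed through both saddles as the hypotheses of Proposition \ref{sd-prop-LJ} demand. With the second derivative, hence $R$, $\phi_\pm$ and the steepest-descent angles $\theta_\pm$ computed as above, and with the sum $\tilde{f}^L_{2,+}+\tilde{f}^L_{2,-}$ evaluated, Proposition \ref{sd-prop-LJ} then delivers the leading asymptotics of $R_{N,\beta}[f^L_2;\mathcal{C}]$ together with the error $\left(1+\mathrm{O}\left(N^{-\mathrm{min}\{2/\beta,1\}}\right)\right)$. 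In parallel I would evaluate the two prefactors: the Laguerre partition-function ratio via the Selberg integral (\ref{sel-int1}), applying Stirling's formula to $\prod_{j=0}^{N-1}\Gamma(a+2p+1+j\beta/2)/\Gamma(a+1+j\beta/2)$ with $a=\tfrac{\beta N}{2}\alpha$, and the circular normalisation $\tilde{C}_{4/\beta,2p}[\omega(\theta)]$ via (\ref{sel-int-2}), whose gamma-function arguments now grow linearly in $N$.

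Finally, I would multiply the three asymptotic expressions and collect terms. The exponential-in-$N$ contributions should combine to give $e^{pN(\lambda-\alpha\log\lambda+(\alpha+1)\log(\alpha+1)-\alpha-2)}$; the algebraic-in-$N$ powers, together with the factors of $\sqrt{\alpha+1}$ and $2\pi$, should reorganise into $(2\pi\sqrt{\alpha+1}\,\rho^{\rm L*}(\lambda))^{p(2-\beta)/\beta}(2\pi N\sqrt{\alpha+1}\,\rho^{\rm L*}(\lambda))^{2p^2/\beta}$; and the residual $\beta$-dependent gamma products should collapse, via the functional equation for $\Gamma$, to the constant $A_{\beta,p}$ of (\ref{A-the-const}), matching the structure of (\ref{asy-Lag}). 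I expect the main obstacle to be precisely this bookkeeping: three independent Stirling/Selberg expansions each contribute competing powers of $N$, $\alpha+1$ and $\lambda$, and the several $\alpha$-dependent exponents --- notably the factors $\left(\tfrac{\alpha+1}{\alpha}\right)^{2\alpha pN-p+4p^2/\beta+2p/\beta}$ appearing identically in both prefactor asymptotics above --- must cancel cleanly so that only $\rho^{\rm L*}$ and $A_{\beta,p}$ survive. A secondary point is to confirm that the order-one Selberg corrections are subdominant to the steepest-descent error $\mathrm{O}\left(N^{-\mathrm{min}\{2/\beta,1\}}\right)$, so that the stated multiplicative remainder genuinely bounds the total error.
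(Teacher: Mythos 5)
Your proposal follows exactly the paper's own route in \S\ref{S4.2}: specialising the duality (\ref{dualf-L}) to $a=\tfrac{\beta N}{2}\alpha$ to reach (\ref{eq-RN-L2}), applying Proposition \ref{sd-prop-LJ} to $f^L_2$ with the same saddle points, angles and $\tilde{f}^L_{2,+}+\tilde{f}^L_{2,-}$, and evaluating the two normalisation prefactors via the Selberg integrals (\ref{sel-int1}) and (\ref{sel-int-2}) --- including the correct observation that the common factors $\bigl(\tfrac{\alpha+1}{\alpha}\bigr)^{2\alpha pN-p+4p^2/\beta+2p/\beta}$ cancel between them. The proof is correct and coincides with the paper's argument in all essentials.
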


\begin{remark}\label{R-C1} ${}$ \\
1.~We observe in (\ref{A-LbetaE2}) the structured form (\ref{S1b}), for particular $\phi_1, \phi_2$, as first seen in the case of the Gaussian $\beta$ ensemble. This is thus distinct
from that observed in the case of the Laguerre ensemble with fixed Laguerre parameter. This is to be expected as the analogue of (\ref{6.1G}) in the case that the Laguerre parameter is proportional to $N$ has $(1-2/\beta)$ as a common
factor; see \cite[Prop.~3.15]{FRW17}. \\
2. As mentioned in the paragraph above Section \ref{S2}, the results in \cite[Theorem~4.3]{DL14} for varying parameters fit in our setting when $\beta=2$, with the parameters related by $\alpha+1=\gamma_1$. However, compared to our result, the constant factor $\sqrt{\alpha+1}$ in the structured function $\phi_2$ and the $\alpha$ related terms $(\alpha+1)\log (\alpha+1)-\alpha $ on the exponent were not seen in their formula.

\end{remark}

\section{Jacobi case}\label{Jac-sec}
Analogous to the Laguerre case, we treat separately the case that the exponents in
the Jacobi weight are of order unity with respect to $N$, and when they are
proportional to $N$.
\subsection{The case with fixed parameter $a_1,a_2$.}
By using the duality formula \cite[eq. (3.15)]{Fo25+} for Jacobi $\beta$-ensemble, we have
\begin{align}\label{dua-for-J1}
    \begin{aligned}
& \left\langle\operatorname{det}\left(\lambda \mathbb{I}_N-H\right)^{2p}\right\rangle_{\mathrm{ME}_{\beta, N}\left[x^{a_1}(1-x)^{a_2}\right]} \\
& \quad =\frac{C_{\beta,N}[x^{a_1+2p}(1-x)^{a_2}](1-\lambda)^{2p N}}{C_{\beta,N}[x^{a_1}(1-x)^{a_2}]}\left\langle\prod_{l=1}^p\left(1-\frac{\lambda}{1-\lambda} e^{2 \pi i \theta_l}\right)^N\right\rangle_{\mathrm{CE}_{4 / \beta, p}\left[\omega(\theta)\right]}\\
&\quad =\frac{C_{\beta,N}[x^{a_1+2p}(1-x)^{a_2}](1-\lambda)^{2p N}}{C_{\beta,N}[x^{a_1}(1-x)^{a_2}]\tilde{C}_{\frac{4}{\beta},2p}[\omega(\theta)](2\pi \mathrm{i})^{2p}}R_{N,\beta}[f^J;\mathcal{C}],
\end{aligned}
\end{align}
where $\omega(\theta)=e^{2 \pi i\left(a_1-a_2\right) \theta / \beta-\pi i \theta N}\left|1+e^{2 \pi i \theta}\right|^{2\left(a_1+a_2+2\right) / \beta+N-2}$ and 
\begin{align*}
    \begin{aligned}
f^J(u,\lambda)& =-\ln u+\ln (1+u)+\ln \left(1-\frac{\lambda}{1-\lambda} u\right) \\
& +\frac{1}{N}\left(\left(\frac{2}{\beta}\left(a_1+a_2+2\right)-2\right) \ln (1+u)-\frac{2}{\beta}\left(a_2+1\right) \ln u\right).
\end{aligned}
\end{align*}
Up to order $1/N$ terms, this function has two saddle points located on
\begin{align}
    u_{ \pm}= \pm i \sqrt{\frac{1}{\lambda}-1}.
\end{align}
Since our interest is in when $0<\lambda<1$, these are a complex conjugate pair.
Moreover, we have
\begin{align*}
    \begin{aligned}
& \left.\frac{\partial^2}{\partial u^2}f^{J*}\right|_{u_\pm}=\frac{1}{u^2}-\frac{1}{(u+1)^2}-\left.\frac{\lambda^2}{(1-\lambda-\lambda u)^2}\right|_{u_{ \pm}}=-4 \lambda^2 \pm \frac{2(1-2 \lambda) \lambda^{3\over 2}}{\sqrt{1-\lambda}} \mathrm{i},
\end{aligned}
\end{align*}
which then gives
\begin{align*}
&R=\frac{2 \lambda^{\frac{3}{2}}}{\sqrt{1-\lambda}}, \quad \phi_{ \pm}= \pm \pi \mp \arcsin (1-2 \lambda),\\
&\theta_+=\frac{1}{2} \arcsin (1-2 \lambda)-\pi,\quad
\theta_{-}=-\frac{1}{2} \arcsin (1-2 \lambda).
\end{align*}
Substituting $u_\pm$ into $\tilde{f}^J(u,\lambda)=f^J(u,\lambda)-{2(2p-1)\over \beta N} \ln u$ gives
\begin{align*}
    \tilde{f}_{+}+\tilde{f}_{-}=-2 \ln (1-\lambda)-\frac{1}{N}\left(\left(\frac{2}{\beta}\left(a_1+1-2 p+1\right)-2\right) \ln \lambda+\frac{2}{\beta}\left(a_2+1+2 p-1\right) \ln (1-\lambda)\right).
\end{align*}
According to Proposition \ref{sd-prop-LJ}, we have
\begin{align*}
    R_{N,\beta}[f^J;\mathcal{C}]\sim& (-\pi)^p2^{\frac{2p^2}{\beta}+\frac{2p}{\beta
    }}N^{-p-\frac{2 p}{\beta}(p-1)}(1-\lambda)^{-2 p N} \lambda^{-\frac{2 p}{\beta} a_1}(1-\lambda)^{-\frac{2 p}{\beta} a_2}\\
    &\times\left(\frac{1}{\sqrt{\lambda(1-\lambda)}}\right)^{\left(\frac{2}{\beta}-1\right) p+\frac{2 p^2}{\beta}}\binom{2p}{p}\left(\prod_{j=1}^{p}\frac{\Gamma(1+2j/\beta)}{\Gamma(1+2/\beta)}\right)^2.
\end{align*}
The normalization constants in \eqref{dua-for-J1} in front of $R_{N,\beta}$ is evaluated by the (modified) Selberg integral \eqref{sel-int-2} and the Selberg integral in its
original form \cite[\S 4.1]{Fo10}
\begin{align}\label{sel-int3}
\begin{aligned}
    S_N\left(\lambda_1, \lambda_2, \lambda\right):=&\int_0^1 d t_1 \cdots \int_0^1 d t_N \prod_{l=1}^N t_l^{\lambda_1}\left(1-t_l\right)^{\lambda_2} \prod_{1 \leq j<k \leq N}\left|t_k-t_j\right|^{2 \lambda}\\
    =&\prod_{j=0}^{N-1} \frac{\Gamma\left(\lambda_1+1+j \lambda\right) \Gamma\left(\lambda_2+1+j \lambda\right) \Gamma(1+(j+1) \lambda)}{\Gamma\left(\lambda_1+\lambda_2+2+(N+j-1) \lambda\right) \Gamma(1+\lambda)},
\end{aligned}
\end{align}
which gives the following large $N$ asympototics
    \begin{align*}
    &\frac{C_{\beta,N}[x^{a_1+2p}(1-x)^{a_2}]}{C_{\beta,N}[x^{a_1}(1-x)^{a_2}](2\pi \mathrm{i})^{2p}}\sim  (-\pi)^{-p} 2^{-4 p N-\frac{4 p}{\beta}\left(a_1+a_2\right)-\frac{4 p^2}{\beta}+2p-\frac{6 p}{\beta}} \\
    &\qquad\qquad \times N^{-p+\frac{4 p^2}{\beta}+\frac{4 p}{\beta} a_1+\frac{2 p}{\beta}}\prod_{j=0}^{2p-1}\frac{1}{\Gamma\left(\frac{2}{\beta}(j+a_1+1)\right)},\\
    &\tilde{C}_{\frac{4}{\beta},2p}[e^{2 \pi i\left(a_1-a_2\right) \theta / \beta-\pi i \theta N}\left|1+e^{2 \pi i \theta}\right|^{2\left(a_1+a_2+2\right) / \beta+N-2}]\\
    &\qquad\qquad\sim N^{\frac{4 p}{\beta} a_1+\frac{4 p}{\beta}-2 p}\prod_{j=0}^{2p-1}\frac{\Gamma\left(1+{2\over\beta}(j+1)\right)}{\Gamma\left(\frac{2}{\beta}(j+a_1+1)\right)\Gamma\left(1+\frac{2}{\beta}\right)}.
\end{align*}

Introducing the limiting eigenvalue density \cite[Prop.~3.6.3]{Fo10}
\begin{align}\label{arc-sin-law}
         \rho^{\rm J}(\lambda)=\frac{1}{\pi\sqrt{\lambda(1-\lambda)}} \mathbbm 1_{0<x<1},
     \end{align}
and
combining these results together gives the sought asymptotic formula (\ref{asy-Jac}).

\subsection{The case with Jacobi parameters proportional to $N$.}\label{S5.2}
Let us consider the Jacobi $\beta$-ensemble with parameters $a_1=\frac{\beta N}{2}\alpha_1$ and $a_2=\frac{\beta N}{2}\alpha_2$. Using the duality formula \eqref{dua-for-J1}, we have
\begin{align*}
    & \left\langle\operatorname{det}\left(\lambda \mathbb{I}_N-H\right)^{2p}\right\rangle_{\mathrm{ME}_{\beta, N}\left[x^{\frac{\beta N}{2}\alpha_1}(1-x)^{\frac{\beta N}{2}\alpha_2}\right]} \\
& \quad =\frac{C_{\beta,N}[x^{\frac{\beta N}{2}\alpha_1+2p}(1-x)^{\frac{\beta N}{2}\alpha_2}](1-\lambda)^{2p N}}{C_{\beta,N}[x^{\frac{\beta N}{2}\alpha_1}(1-x)^{\frac{\beta N}{2}\alpha_2}]}\left\langle\prod_{l=1}^p\left(1-\frac{\lambda}{1-\lambda} e^{2 \pi i \theta_l}\right)^N\right\rangle_{\mathrm{CE}_{4 / \beta, p}\left[\omega(\theta)\right]}\\
&\quad =\frac{C_{\beta,N}[x^{\frac{\beta N}{2}\alpha_1+2p}(1-x)^{\frac{\beta N}{2}\alpha_2}](1-\lambda)^{2p N}}{C_{\beta,N}[x^{\frac{\beta N}{2}\alpha_1}(1-x)^{\frac{\beta N}{2}\alpha_2}]\tilde{C}_{\frac{4}{\beta},2p}[\omega(\theta)](2\pi \mathrm{i})^{2p}}R_{N,\beta}[f^J;\mathcal{C}],
\end{align*}
where $\omega(\theta)=e^{\pi \mathrm{i}\theta(\alpha_1-\alpha_2-1)N}|1+e^{2\pi \mathrm{i}\theta}|^{(\alpha_1+\alpha_2+1)N+\frac{4}{\beta}-2} $, and 
\begin{align*}
    f^J_2(u,\lambda)=&-(\alpha_1+1)\log u+(\alpha_1+\alpha_2+1)\log (1+u)+\log\left(1-\frac{\lambda}{1-\lambda}u\right)\\
    &+\frac{1}{N}\left(-\frac{2}{\beta}\log u+\left(\frac{4}{\beta}-2\right)\log (1+u)\right).
\end{align*}
The saddle points are given by
\begin{align*}
    u_{\pm}=&\left.\frac{1}{2\lambda(\alpha_1+1)}\right((\alpha_2-\alpha_1)\lambda+\alpha_1\\
    &\left. \pm\mathrm{i}\sqrt{((\alpha_1+\alpha_2+2)^2+\alpha_1^2-\alpha_2^2)\lambda-(\alpha_1+\alpha_2+2)^2\lambda^2-\alpha_1^2}\right).
\end{align*}
Note that the square root term coincides with the numerator in the leading eigenvalue density
\begin{align}\label{lea-eigJ*}
    \rho^{\rm J*}(\lambda) =  \frac{2+\alpha_1+\alpha_2}{2 \pi} \frac{\sqrt{\left(\lambda-c_1\right)\left(c_2-\lambda\right)}}{\lambda(1-\lambda)},
\end{align}
which is parametrized by $c_1,c_2$ satisfying
\begin{align}\label{c1c2}
    \begin{aligned}
2\left(c_1+c_2-1\right) & =\frac{2\left(\alpha_1^2-\alpha_2^2\right)}{\left(\alpha_1+\alpha_2+2\right)^2}, \\
\left(2 c_1-1\right)\left(2 c_2-1\right) & =\frac{2\left(\alpha_1^2+\alpha_2^2\right)}{\left(\alpha_1+\alpha_2+2\right)^2}-1,
\end{aligned}
\end{align}
provided it is required that $\lambda \in (c_1,c_2)$ which we do henceforth.
Further calculation shows
\begin{small}
    \begin{align}
    &\left.\frac{\partial^2}{\partial u^2}f_2^J(u,\lambda)\right|_{u_\pm}=\frac{((1+\alpha_1)(2+\alpha_1+\alpha_2)\lambda^2-2(1+\alpha_2)\lambda-\alpha_1)\Delta}{2(1+\alpha_2)(1+\alpha_2+\alpha_2)(\lambda-1)^2}\\
    &\pm\frac{((1+\alpha_2)((2+\alpha_1+\alpha_2)^2\lambda^3-(6+4\alpha_1+4\alpha_2+\alpha_1\alpha_2+\alpha_1^2)\lambda^2)+(\alpha_1^2-\alpha_1\alpha_2+2\alpha_2+2)\lambda-\alpha_1^2)\sqrt{\Delta}}{2(1+\alpha_2)(1+\alpha_2+\alpha_2)(\lambda-1)^2}\mathrm{i},
\end{align}
\end{small}
where $\Delta=((\alpha_1+\alpha_2+2)^2+\alpha_1^2-\alpha_2^2)\lambda-(\alpha_1+\alpha_2+2)^2\lambda^2-\alpha_1^2$. According to the notation used in \eqref{f-dr-re} and \eqref{theta-def}, we have
\begin{align}
    R=\frac{(1+\alpha_1)^{\frac{3}{2}}\lambda\sqrt{\Delta}}{(1+\alpha_2)^{\frac{1}{2}}(1+\alpha_1+\alpha_2)^{\frac{1}{2}}(1-\lambda)},\quad \theta_++\theta_-=\pi.
\end{align}
Recall that $\tilde{f}_2^J(u,\lambda)=f_2^J(u,\lambda)-\frac{2(2p-1)}{\beta N}\log u$. Evaluating this function at the saddle points gives
\begin{align*}
    \tilde{f}_{2,+}^J+\tilde{f}_{2,-}^J=&(1+\alpha_1+\alpha_2)\log (1+\alpha_1+\alpha_2)-(1+\alpha_1)\log(1+\alpha_1)-(1+\alpha_2)\log(1+\alpha_2)\\
    &-\alpha_1\log \lambda-(\alpha_2+2)\log(1-\lambda)+\frac{4}{\beta N}\left(\left(1-\frac{\beta}{2}\right)\log(1+\alpha_1+\alpha_2)\right.\\
    &+\left.\left(p-1+\frac{\beta}{2}\right)\log(1+\alpha_1)-p\log(1+\alpha_2)-p\log(1-\lambda)+\left(p-1+\frac{\beta}{2}\right)\log\lambda\right).
\end{align*}
Thus, from Proposition \ref{sd-prop-LJ}, we know that 
\begin{align*}
    R_{N,\beta}[f_2^J;\mathcal{C}]\sim& (-2\pi)^pN^{-p-\frac{2p}{\beta}(p-1)}\left((1+\alpha_1)(1+\alpha_2)\right)^{-\frac{p}{2}\left(\frac{2}{\beta}-1\right)-\frac{3p^2}{\beta}}(1+\alpha_1+\alpha_2)^{\frac{3p}{2}\left(\frac{2}{\beta}-1\right)+\frac{p^2}{\beta}}\\
    &\quad\times\left(\frac{\sqrt{\Delta}}{\lambda(1-\lambda)}\right)^{p\left(\frac{2}{\beta}-1\right)+\frac{2p^2}{\beta}}e^{pNg_1(\alpha_1,\alpha
    _2,\lambda)}\binom{2 p}{p} \left(\prod_{j=1}^p \frac{\Gamma(1+2 j / \beta)}{\Gamma(1+2 / \beta)}\right)^2,
\end{align*}
where
\begin{align*}
    g_1(\alpha_1,\alpha
    _2,\lambda)=&(1+\alpha_1+\alpha_2)\log (1+\alpha_1+\alpha_2)-(1+\alpha_1)\log(1+\alpha_1)\\
    &-(1+\alpha_2)\log(1+\alpha_2)-\alpha_1\log \lambda-(\alpha_2+2)\log(1-\lambda).
\end{align*}
According to the Selberg integrals \eqref{sel-int-2}, \eqref{sel-int3}, we have
\begin{align}
&  \begin{aligned}
    &\frac{C_{\beta,N}[x^{\frac{\beta N}{2}\alpha_1+2p}(1-x)^{\frac{\beta N}{2}\alpha_2}]}{C_{\beta,N}[x^{\frac{\beta N}{2}\alpha_1}(1-x)^{\frac{\beta N}{2}\alpha_2}]}\sim \left(\frac{1+\alpha_1}{\alpha_1}\right)^{2 p N \alpha_1+p\left(\frac{2}{\beta}-1\right)+\frac{4 p^2}{\beta}}\\
    &\qquad\times\left(\frac{1+\alpha_1+\alpha_2}{2+\alpha_1+\alpha_2}\right)^{2 p N\left(\alpha_1+\alpha_2+1\right)+3 p\left(\frac{2}{\beta}-1\right)+\frac{4 p^2}{\beta}} \left(\frac{1+\alpha_1}{2+\alpha_1+\alpha_2}\right)^{2 p N},
\end{aligned}
    \\
   &\begin{aligned}
        &(2\pi \mathrm{i})^{2p}\tilde{C}_{\frac{4}{\beta},2p}[e^{\pi \mathrm{i}\theta(\alpha_1-\alpha_2-1)N}|1+e^{2\pi \mathrm{i}\theta}|^{(\alpha_1+\alpha_2+1)N+\frac{4}{\beta}-2}]\\
    &\quad \sim (-2 \pi)^p N^{-p-\frac{2 p}{\beta}(2 p-1)} \alpha_1^{-2 p N \alpha_1-p\left(\frac{2}{\beta}-1\right)-\frac{4 p^2}{\beta}}\left(1+\alpha_1\right)^{-2 p N\left(1+\alpha_1\right)-p\left(\frac{2}{\beta}-1\right)-\frac{4 p^2}{\beta}}\\
&\qquad\times\left(1+\alpha_1+\alpha_2\right)^{2 p N\left(1+\alpha_1+\alpha_2\right)+3 p\left(\frac{2}{\beta}-1\right)+\frac{4 p^2}{\beta}}\prod_{j=0}^{2p-1}\frac{\Gamma\left(1+{2\over\beta}(j+1)\right)}{\Gamma\left(1+\frac{2}{\beta}\right)}.
    \end{aligned}
\end{align}
The above asymptotic formulas lead to the compannion to (\ref{asy-Jac}) in the
case that the Jacobi parameters are proportional to $N$.
\begin{prop}\label{P5.2} With $p$ a positive integer, we have that for large $N$
\begin{small}
    \begin{align}\label{J31}
\begin{aligned}
    &\left\langle\prod_{l=1}^{N}(\lambda-x_l)^{2p}\right\rangle_{\mathrm{ME}_{\beta, N}\left[x^{\alpha_1\beta N/2}(1-x)^{\alpha_2\beta N/2}\right]}=A_{\beta,p}\left(\frac{2\pi\sqrt{(1+\alpha_1)(1+\alpha_2)(1+\alpha_1+\alpha_2)}}{(2+\alpha_1+\alpha_2)^{2}}N\rho^{\rm J*}(\lambda)\right)^{2p^2/\beta}\\
&\qquad\times\left(\frac{2\pi\sqrt{(1+\alpha_1)(1+\alpha_2)}(1+\alpha_1+\alpha_2)^{\frac{3}{2}}}{(2+\alpha_1+\alpha_2)^{3}}\rho^{\rm J*}(\lambda)\right)^{p(2-\beta)/\beta}
    e^{pNg_2(\alpha_1,\alpha_2,\lambda)}
    \left(1+\mathrm{O}\left( N^{-\mathrm{min}\{2/\beta,1\}}\right)\right),
\end{aligned}
\end{align}
\end{small}
    where
\begin{align}\label{g2}
    g_2(\alpha_1,\alpha_2,\lambda)=&(1+\alpha_1+\alpha_2)\log(1+\alpha_1+\alpha_2)+(1+\alpha_1)\log(1+\alpha_1)+(1+\alpha_2)\log(1+\alpha_2) \nonumber \\
&-2(2+\alpha_1+\alpha_2)\log(2+\alpha_1+\alpha_2)-\alpha_1\log\lambda-\alpha_2\log(1-\lambda),
\end{align}
$A_{\beta, p}$ is given by \eqref{A-the-const} and $\rho^{\rm J*}(\lambda)$ is the leading eigenvalue density \eqref{lea-eigJ*}.
\end{prop}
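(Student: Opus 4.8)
The plan is to follow verbatim the route already used for the fixed-parameter Jacobi case and for Proposition~\ref{pro-LbE2}, assembling the ingredients prepared in the displays immediately preceding the statement. First I would invoke the duality formula~\eqref{dua-for-J1} with $a_1=\beta N\alpha_1/2$ and $a_2=\beta N\alpha_2/2$, which rewrites the $N$-dimensional average on the left of~\eqref{J31} as a product of two Selberg-type normalization constants times the $2p$-dimensional contour integral $R_{N,\beta}[f_2^J;\mathcal{C}]$ of the form~\eqref{RN}, with the $N$-dependent exponent $f_2^J(u,\lambda)$ recorded above. The problem then splits into two independent asymptotic tasks: the large-$N$ behaviour of $R_{N,\beta}[f_2^J;\mathcal{C}]$ via steepest descent, and the large-$N$ behaviour of the prefactor via the Selberg integrals~\eqref{sel-int-2} and~\eqref{sel-int3}.

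For the contour integral I would first check that the two saddle points $u_\pm$ displayed above form a genuine complex-conjugate pair exactly when $\lambda\in(c_1,c_2)$, with $c_1,c_2$ as in~\eqref{c1c2}, and that their common imaginary part reproduces the density $\rho^{\rm J*}(\lambda)$ of~\eqref{lea-eigJ*}; this is what forces $\rho^{\rm J*}$ into the final answer. After verifying the ordering hypothesis $0\le\arg(u_+)<\arg(u_-)\le 2\pi$ so that Lemma~\ref{lem2} applies, Proposition~\ref{sd-prop-LJ} may be invoked directly. Substituting the computed curvature $R$, the steepest-descent angles with $\theta_++\theta_-=\pi$, and the saddle value $\tilde f^J_{2,+}+\tilde f^J_{2,-}$ then produces the stated asymptotics of $R_{N,\beta}[f_2^J;\mathcal{C}]$, carrying the factor $\binom{2p}{p}\bigl(\prod_{j=1}^p\Gamma(1+2j/\beta)/\Gamma(1+2/\beta)\bigr)^2$, the exponential $e^{pN g_1(\alpha_1,\alpha_2,\lambda)}$, and the error term $(1+\mathrm{O}(N^{-\min\{2/\beta,1\}}))$.

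The remaining step is to expand the ratio of partition functions $C_{\beta,N}[x^{\beta N\alpha_1/2+2p}(1-x)^{\beta N\alpha_2/2}]/C_{\beta,N}[x^{\beta N\alpha_1/2}(1-x)^{\beta N\alpha_2/2}]$ and the circular normalization $\tilde C_{4/\beta,2p}[\omega(\theta)]$ through~\eqref{sel-int-2} and~\eqref{sel-int3}, applying Stirling's formula to the resulting ratios of gamma functions exactly as in the earlier sections; the leading orders are the two displays above. Multiplying the three pieces, the combinatorial factors collapse to the universal constant $A_{\beta,p}$ of~\eqref{A-the-const} as in the fixed-parameter case, while the surviving powers of $N$ and the $\alpha$-dependent algebraic factors reorganise into the structured form~\eqref{S1b} with density $\rho^{\rm J*}$ and exponents $2p^2/\beta$ and $p(2-\beta)/\beta$.

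The main obstacle is bookkeeping rather than any new idea. The exponential rate in~\eqref{J31} is $g_2$ of~\eqref{g2}, whereas steepest descent delivers only $g_1$; hence $g_2$ must emerge as $g_1$ plus the leading exponential growth rates of the two normalization constants together with the explicit factor $(1-\lambda)^{2pN}$ in~\eqref{dua-for-J1}. Verifying that the $\log\lambda$, $\log(1-\lambda)$, $\log(1+\alpha_1)$, $\log(1+\alpha_2)$ and $\log(2+\alpha_1+\alpha_2)$ contributions recombine cleanly into~\eqref{g2}, while simultaneously the order-one powers of $(1+\alpha_1)$, $(1+\alpha_2)$ and $(2+\alpha_1+\alpha_2)$ assemble into the two prefactors raised to $2p^2/\beta$ and $p(2-\beta)/\beta$, is the delicate part, and it is here that careful tracking of the $1/N$ corrections in $f_2^J$ and of the $\arg$ branches is essential.
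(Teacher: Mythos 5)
Your proposal is correct and follows the paper's own derivation essentially step for step: the same specialisation of the duality formula \eqref{dua-for-J1}, the same saddle-point data (conjugate pair $u_\pm$ for $\lambda\in(c_1,c_2)$ tied to $\rho^{\rm J*}$, curvature $R$, $\theta_++\theta_-=\pi$, and $\tilde f^J_{2,+}+\tilde f^J_{2,-}$) fed into Proposition \ref{sd-prop-LJ} via Lemma \ref{lem2}, and the same Selberg-integral asymptotics \eqref{sel-int-2}, \eqref{sel-int3} for the two normalization constants. You also correctly identify the only delicate point, namely that $g_2$ in \eqref{g2} arises from $g_1$ combined with the $(1-\lambda)^{2pN}$ factor and the exponential growth rates of the normalizations, which is precisely how the paper assembles the final formula \eqref{J31}.
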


\begin{remark}\label{R-C2} ${}$
1.~As for the Gaussian $\beta$ ensemble result
(\ref{asy-Gau}) and that for the Laguerre
$\beta$ ensemble with Laguerre parameter
proportional to $N$ 
(\ref{A-LbetaE2}), we observe in (\ref{J31})
the structured form (\ref{S1b}).
As for the
Laguerre
$\beta$ ensemble with Laguerre parameter
proportional to $N$, 
this is to be expected as the analogue of (\ref{6.1G}) in the case that the Jacobi parameters are proportional to $N$ has $(1-2/\beta)$ as a common
factor; this is a consequence of \cite[Prop.~4.8]{FRW17}. \\
2.~When $\beta=2$, this is compared to \cite[Theorem~4.3]{DL14}, with parameters related by $\alpha_1+1=\gamma_1, \alpha_2+1=\gamma_2$. The term $\frac{\sqrt{(1+\alpha_1)(1+\alpha_2)(1+\alpha_1+\alpha_2)}}{(2+\alpha_1+\alpha_2)^{2}}$ and all the $\alpha_1,\alpha_2$ related constants in $g_2(\alpha_1,\alpha_2,\lambda)$ were not included in their formula.

\end{remark}

\subsection*{Acknowledgements}
This research has been supported by the Australian Research Council 
Discovery Project grants DP210102887 and DP250102552.

\section*{Appendix}
\subsection*{Global densities}
Let us denote by $\tilde{\rho}_N(\lambda)$ the eigenvalue density of a classical $\beta$ ensemble with joint PDF proportional to \eqref{1a}, with superscripts added  to distinguish the different cases. By definition, $\rho_N(\lambda)$ is given by $N$ times the integration of the joint PDF \eqref{1a} with respect to all but one variable, and thus relates to the average power of the characteristic polynomial via
\begin{align}\label{eig-den}
    \tilde{\rho}_{N+1}(\lambda)=\frac{(N+1)C_{\beta,N}[\omega(x)]}{C_{\beta,N+1}[\omega(x)]}\omega(\lambda)\left\langle\prod_{l=1}^{N}|\lambda-x_l|^\beta\right\rangle_{\mathrm{ME}_{\beta,N}[\omega(x)]},
\end{align}
where $C_{\beta,N}[\omega(x)]$ is given by (\ref{Cb}).
A so-called global scaling can be chosen such that for $N\to\infty$, the eigenvalue density of Gaussian and Laguerre $\beta$ ensemble has compact support (in the Jacobi
case the support is $(0,1)$ without changing variables). 
It is well known (see e.g.~\cite[Ch.~1 and 3]{Fo10}) that the leading eigenvalue density upon such a global scaling is given by
\begin{align}
    &\sqrt{\beta N}\rho_{N+1}^\mathrm{G}(\sqrt{\beta N}\lambda)\sim N \rho^{\mathrm{W}}(\lambda),\quad -1<\lambda<1,\label{5.34}\\
    &2\beta N\rho_{N+1}^\mathrm{L}(2\beta N\lambda)\sim N\rho^{\mathrm{MP}}(\lambda),\quad 0<\lambda<1,\\
    &\rho_{N+1}^\mathrm{J}(\lambda)\sim N\rho^\mathrm{J}(\lambda),\quad 0<\lambda<1,
\end{align}
where $\rho^{\rm W}(\lambda),\rho^{\rm MP}(\lambda)\text{ and } \rho^{\rm J}(\lambda) $ are given by \eqref{semi-cir-law}, \eqref{Mar-Pas-law} and \eqref{arc-sin-law} respectively. For the Laguerre and Jacobi cases with $\mathrm{O}(N)$ parameters, we have
\begin{align}
    &\frac{\beta N}{2}\rho_{N+1}^{\mathrm{L}*}\left(\frac{\beta N}{2}\lambda\right)\sim N\rho^{\mathrm{L}*}(\lambda),\quad c^2<\lambda<d^2,\\
    &\rho_{N+1}^{\mathrm{J}*}(\lambda)\sim N\rho^{\mathrm{J}*}(\lambda),\quad c_1<\lambda<c_2, \label{5.38}
\end{align}
where $\rho^{\rm L*}(\lambda)$ and $\rho^{\rm J*}(\lambda)$ are given by \eqref{L-dens2} and \eqref{lea-eigJ*} respectively. 
On the other hand, our asymptotic formulas of Propositions \ref{P2.1}, \ref{pro-LbE2}
and \ref{P5.2} in the case $p=\beta/2$ ($\beta$ even) give the asymptotic form of the
average in (\ref{eig-den}). Combining this with the asymptotics of the ratio of
the normalisations (\ref{Cb}) appearing in (\ref{eig-den}) as follows from their
evaluation in terms of the corresponding Selberg integrals as given by (\ref{3.6}),
(\ref{sel-int-2}) and (\ref{sel-int3}) indeed reclaims each of (\ref{5.34})--(\ref{5.38}). We remark too that in the cases of (\ref{5.34}) and (\ref{5.38}) extended terms of the asymptotic formulas for even $\beta$ are available in the literature \cite{DF06}, the order of which agree with the bound (\ref{x1}).

\subsection*{Comparisons with known results for $\beta = 2$}
In what follows, we compare our results to the asymptotic formulas in \cite{CG21} for $\beta=2$,
which have been proved for general Re$\,(p) > - {1 \over 2}$.
\begin{itemize}
    \item \textbf{The Gaussian case.} Taking $\vec{\alpha}=2p,\vec{\beta}=0,V=2x^2,W=0$ in Theorem 1 in \cite{CG21}, we see that
    \begin{align}
        \frac{G_N(2p,0,V,0)}{G_N(0,0,V,0)}=\left\langle\prod_{l=1}^N |\lambda-x_l|^{2 p}\right\rangle_{\mathrm{ME}_{2, N}\left[e^{-2N x^2}\right]}\sim \exp(C_1N+C_2\log N+C_3),
    \end{align}
    where
    \begin{align}
        &C_1=-2p\log2-\frac{p}{\pi}\int_{-1}^{1}\frac{V(x)}{\sqrt{1-x^2}}dx+pV(\lambda)=-2p\log2-p+2p\lambda,\\
        &C_2=p^2,\quad C_3=p^2\log2\sqrt{1-\lambda^2}+\log\frac{G(1+p)^2}{G(1+2p)}.
    \end{align}
    Setting $\beta=2$ in \eqref{asy-Gau} yields exactly the same formula. 
    
    \item \textbf{The Laguerre case with fixed parameter $a$.}  Take the parameters $\vec{\alpha}=(a,2p),\vec{\alpha}_0=a$ and $\vec{\beta}=0$ in \cite[Proposition~7.2]{CG21}. In terms of our notation, this gives
    \begin{align}
        \begin{aligned}\label{Lag-fix-che}
        &\log\left\langle\prod_{l=1}^{N}|\lambda-x_l|^{2p}\right\rangle_{\mathrm{ME}_{2,N}[(x+1)^ae^{-2N(x+1)}]}\\
        &\qquad\sim 2pN(\lambda-\log 2)+p^2\log\left(N\sqrt{\frac{1-\lambda}{1+\lambda}}\right)+\log\frac{G(1+p)^2}{G(1+2p)}-pa\log 2|\lambda+1|.
        \end{aligned}
    \end{align}
    By changing variables $x_l+1\to2x_l $ in the multiple integral, one has
    \begin{align}
        \left\langle\prod_{l=1}^{N}|2\lambda-1-x_l|^{2p}\right\rangle_{\mathrm{ME}_{2,N}[(x+1)^ae^{-2N(x+1)}]}=2^{2pN}\left\langle\prod_{l=1}^{N}|\lambda-x_l|^{2p}\right\rangle_{\mathrm{ME}_{2,N}[x^ae^{-4Nx}]}.
    \end{align}
    The asymptotics of the right hand side is then given by setting $\beta=2$ in \eqref{asy-Lag}
    \begin{align}
        \left\langle\prod_{l=1}^{N}|\lambda-x_l|^{2p}\right\rangle_{\mathrm{ME}_{2,N}[x^ae^{-4Nx}]}\sim A_{2,p}\left(N\sqrt{\frac{1}{\lambda}-1}\right)^{p^2}(4\lambda)^{-pa}e^{2pN(2\lambda-1-2\log 2)},
    \end{align}
    which coincides with \eqref{Lag-fix-che} upon taking exponential and setting $\lambda\to2\lambda-1$.
    \item \textbf{The Laguerre case with $\mathrm{O}(N)$ parameter $a=\frac{\beta N}{2}\alpha$.} The leading eigenvalue density \eqref{L-dens2} suggests that this case belongs to the Gaussian type weights considered in \cite{CG21}. The asymptotic formula is then given by \cite[Theorem 1]{CG21}. To apply this theorem, an affine transformation on the variable of the weight function is required such that the equilibrium measure has the standard form. Let
    \begin{align*}
        V(x)=2\sqrt{\alpha+1}x+\alpha+2-\alpha\log(2\sqrt{\alpha+1}x+\alpha+2),\quad \psi(x)=\frac{2(\alpha+1)}{\pi(2\sqrt{\alpha+1}x+\alpha+2)}.
    \end{align*}
    Then the equilibrium measure for $V$ is supported on $[-1,1]$ with density $\psi(x)\sqrt{1-x^2}$ and we have
    \begin{align}\label{aym-b=2}
        \left\langle\prod_{l=1}^{N}(\lambda-x_l)^{2p}\right\rangle_{\mathrm{ME}_{2,N}[e^{-NV(x)}]}\sim \exp\left(C_1N+C_2\log N+C_3\right),
    \end{align}
    where
    \begin{align}\label{C1-3}
    \begin{aligned}
        &C_1=-2p\log2-\frac{p}{\pi}\int_{-1}^{1}\frac{V(x)}{\sqrt{1-x^2}}dx+pV(\lambda),\quad C_2=p^2,\\
        &C_3=p^2\log\left(\frac{\pi}{2}\psi(\lambda)\right)+p^2\log 2\sqrt{1-\lambda^2}+\log\frac{G(1+p)^2}{G(1+2p)}.
    \end{aligned}
    \end{align}
    Proceed with the evaluation of the definite integral 
    \begin{align}\label{int-1}
        \int_{-1}^{1}\frac{\log(ax+b)}{\sqrt{1-x^2}}dx=\pi\log\left(\frac{b+\sqrt{b^2-a^2}}{2}\right),
    \end{align}
    we have
    \begin{align}
&C_1=p(-2\log2+\alpha\log(\alpha+1)+2\sqrt{\alpha+1}\lambda-\alpha\log(2\sqrt{\alpha+1}\lambda+\alpha+2)),\\
        &C_2=p^2,\quad C_3=p^2\log\left(\frac{2(\alpha+1)}{2\sqrt{\alpha+1}\lambda+\alpha+2}\sqrt{1-\lambda^2}\right)+\log\frac{G(1+p)^2}{G(1+2p)}.
    \end{align}
    On the other hand, change of variables in the multiple integral shows
    \begin{align}
        \left\langle\prod_{l=1}^{N}(\lambda-x_l)^{2p}\right\rangle_{\mathrm{ME}_{2,N}[e^{-NV(x)}]}=(2\sqrt{\alpha+1})^{-2pN}\left\langle\prod_{l=1}^{N}(2\sqrt{\alpha+1}\lambda+\alpha+2-x_l)^{2p}\right\rangle_{\mathrm{ME}_{2,N}[x^{ N\alpha}e^{-Nx}]},
    \end{align}
    which upon using \eqref{A-LbetaE2} gives the same asymptotic formula.

    \item \textbf{The Jacobi case with fixed parameter $a_1,a_2$.} Denote $\vec{\alpha}_1=(a_1,2p,a_2),\vec{\alpha}_2=(a_1,a_2)$ and $t_1=\lambda$. Then according to \cite[Theorem 6.2]{CG21}, we have
    \begin{align}
        &\log\frac{J_N(\vec{\alpha}_1,0,0,0)}{J_N(\vec{\alpha}_2,0,0,0)}=\log \frac{J_N(\vec{\alpha}_1,0,0,0)}{J_N(0,0,0,0)}-\log \frac{J_N(\vec{\alpha}_2,0,0,0)}{J_N(0,0,0,0)}\\
        &\sim -(2pN+p(a_1+a_2))\log2+p^2\log N-a_1p\log(1+\lambda)-a_2p\log(1-\lambda)-\frac{p^2}{2}\log(1-\lambda^2),
    \end{align}
    as $N\to\infty$. On the other hand, this ratio of Hankel determinants relates to the  averaged characteristic polynomials by
    \begin{align}
        \log &\frac{J_N(\vec{\alpha}_1,0,0,0)}{J_N(\vec{\alpha}_2,0,0,0)}=2^{2pN}\left\langle\prod_{l=1}^{N}\left(\frac{1+\lambda}{2}-x_l\right)^{2p}\right\rangle_{\mathrm{ME}[x^{a_1}(1-x)^{a_2}]},
    \end{align}
    which after using \eqref{asy-Jac} gives the same asymptotic formula.

    \item \textbf{The Jacobi case with $\mathrm{O}(N)$ parameter $a_1=\frac{\beta N}{2}\alpha_1,a_2=\frac{\beta N}{2}\alpha_2$.} Similar to the Laguerre case with $\mathrm{O}(N)$ parameter, the global density \eqref{lea-eigJ*} exhibits square root vanishing at the end points of its support, which then belongs to the Gaussian type weight. Recall that the end points are given by $c_1,c_2$ satisfying \eqref{c1c2}. Through the transformation $\lambda\to\frac{c_1+c_2}{2}+\frac{c_2-c_1}{2}\lambda$, the global density can be written as the standard form $\psi(\lambda)\sqrt{1-\lambda^2}$, where
    \begin{align}
        \psi(\lambda)=\frac{\alpha_1+\alpha_2+2}{2\pi \left(\frac{c_1+c_2}{c_2-c_1}+\lambda\right)\left(\frac{2-c_1-c_2}{c_2-c_1}-\lambda\right)}.
    \end{align}
    In terms of the notation in \cite{CG21}, the corresponding weight function is 
    \begin{align}
        V(x)=-\left(\alpha_1\log\left(\frac{c_1+c_2}{2}+\frac{c_2-c_1}{2}x\right)+\alpha_2\log\left(1-\frac{c_1+c_2}{2}-\frac{c_2-c_1}{2}x\right)\right).
    \end{align}
    According to \cite[Theorem 1]{CG21}, the asymptotics of the average $\left\langle\prod_{l=1}^{N}(\lambda-x_l)^{2p}\right\rangle_{\mathrm{ME}_{2,N}[e^{-NV(x)}]}$ is then given by \eqref{aym-b=2} with coefficients \eqref{C1-3}. 
    Making use of \eqref{int-1},  we see that 
    \begin{align}
        -\frac{p}{\pi}\int_{-1}^{1}\frac{V(x)}{\sqrt{1-x^2}}dx=p\left(\alpha_1\log(1+\alpha_1)+\alpha_2\log (1+\alpha_2)\right.\\
        +\left.(\alpha_1+\alpha_2)(\log(\alpha_1+\alpha_2+1)-2\log(\alpha_1+\alpha_2+2))\right),
    \end{align}
    which then gives
    \begin{align}
        &\left\langle\prod_{l=1}^{N}(\lambda-x_l)^{2p}\right\rangle_{\mathrm{ME}_{2,N}[e^{-NV(x)}]}\\
        &\sim \frac{G(1+p)^2}{G(1+2p)}N^{p^2}\left(\frac{2\pi ((1+\alpha_1)(1+\alpha_2)(1+\alpha_1+\alpha_2))^{1/2}}{(2+\alpha_1+\alpha_2)^2}\rho^{\mathrm{J}*}\left(\frac{c_1+c_2}{2}+\frac{c_2-c_1}{2}\lambda\right)\right)^{p^2}\\
        &\qquad\times e^{pN\left(\alpha_1\log(1+\alpha_1)+\alpha_2\log (1+\alpha_2)+(\alpha_1+\alpha_2)(\log(\alpha_1+\alpha_2+1)-2\log(\alpha_1+\alpha_2+2))-2\log2+V(\lambda)\right)}.
    \end{align}
    On the other hand, a change of variable gives
    \begin{align}
        \left\langle\prod_{l=1}^{N}\left(\frac{c_1+c_2}{2}+\frac{c_2-c_1}{2}\lambda-x_l\right)^{2p}\right\rangle_{\mathrm{ME}_{2,N}\left[x^{\frac{\beta N}{2}\alpha_1}(1-x)^{\frac{\beta N}{2}\alpha_2}\right]}\\
        =\left(\frac{c_2-c_1}{2}\right)^{2pN}\left\langle\prod_{l=1}^{N}(\lambda-x_l)^{2p}\right\rangle_{\mathrm{ME}_{2,N}[e^{-NV(x)}]}.
    \end{align}
    By setting $\beta=2$ in \eqref{J31} and noticing that 
    \begin{align}
        \frac{c_2-c_1}{2}=\frac{2((1+\alpha_1)(1+\alpha_2)(1+\alpha_1+\alpha_2))^{1/2}}{(2+\alpha_1+\alpha_2)^2},
    \end{align}
    we get the same asymptotic formula.
\end{itemize}

\small
\providecommand{\bysame}{\leavevmode\hbox to3em{\hrulefill}\thinspace}
\providecommand{\MR}{\relax\ifhmode\unskip\space\fi MR }
\providecommand{\MRhref}[2]{%
  \href{http://www.ams.org/mathscinet-getitem?mr=#1}{#2}
}
\providecommand{\href}[2]{#2}

\end{document}